\pdfoutput=1
\documentclass[acmsmall,screen,nonacm]{acmart}

\usepackage{stmaryrd}
\usepackage{mathpartir}
\usepackage{booktabs}
\usepackage{listings}
\usepackage{xcolor}
\usepackage{enumitem}
\usepackage{float}
\usepackage{subcaption}

\newtheorem{observation}{Observation}

\lstdefinelanguage{yaml}{
  keywords={true,false,null},
  sensitive=false,
  comment=[l]{\#},
  morestring=[b]',
  morestring=[b]",
}

\newcommand{\lam}{\lambda}
\newcommand{\lA}{\lam_{\!A}}

\newcommand{\DSL}{\textsc{lambdagent}}

\newcommand{\Nat}{\mathbb{N}}
\newcommand{\reduces}{\longrightarrow}

\newcommand{\bigred}{\Downarrow}
\newcommand{\ctx}{\Gamma}
\newcommand{\entails}{\vdash}
\newcommand{\tymark}{:}

\begin{document}

\title{$\lA$: A Typed Lambda Calculus for LLM Agent Composition}

\author{Qin Liu}
\affiliation{\institution{State Key Laboratory of Novel Software Technology, Nanjing University}\city{Nanjing}\state{Jiangsu}\postcode{210023}\country{China}}
\affiliation{\institution{Software Institute, Nanjing University}\city{Nanjing}\state{Jiangsu}\postcode{210093}\country{China}}
\email{qinliu@nju.edu.cn}

\begin{abstract}
Existing LLM agent frameworks lack formal semantics: there is no principled way to determine whether an agent configuration is well-formed or will terminate. We present $\lA$, a typed lambda calculus for agent composition that extends the simply-typed lambda calculus with \emph{oracle calls}, \emph{bounded fixpoints} (the ReAct loop), \emph{probabilistic choice}, and \emph{mutable environments}. We prove type safety, termination of bounded fixpoints, and soundness of derived lint rules, with full Coq mechanization (1{,}519 lines, 42 theorems, 0 Admitted). As a practical application, we derive a \emph{lint} tool that detects structural configuration errors directly from the operational semantics. An evaluation on 835 real-world GitHub agent configurations shows that 94.1\% are \emph{structurally incomplete} under $\lA$---with YAML-only lint precision at 54\%, rising to 96--100\% under joint YAML+Python AST analysis on 175 samples. This gap quantifies, for the first time, the degree of \emph{semantic entanglement} between declarative configuration and imperative code in the agent ecosystem. We further show that five mainstream paradigms (LangGraph, CrewAI, AutoGen, OpenAI SDK, Dify) embed as typed $\lA$ fragments, establishing $\lA$ as a unifying calculus for LLM agent composition.
\end{abstract}

\maketitle

\section{Introduction}
\label{sec:intro}

LLM-based agents are increasingly deployed in production, typically configured via YAML or JSON files that specify a language model, a set of tools, a control flow pattern (such as ReAct~\cite{yao2023react}), and optional persistent memory. The community has developed numerous frameworks for agent construction---LangChain, DSPy~\cite{khattab2024dspy}, CrewAI, AutoGen---yet none provides a formal account of what an agent configuration \emph{means}.

This absence of formal semantics has concrete consequences:
\begin{itemize}[leftmargin=*,itemsep=2pt]
  \item A ReAct agent configured without a \texttt{terminate} tool may loop indefinitely; no existing tool warns the developer.
  \item Two agent pipelines that ``look different'' in YAML may be semantically equivalent (e.g., a chain of two agents vs.\ a single agent with a composed prompt), but there is no way to establish this.
  \item Refactoring an agent pipeline---splitting a monolithic agent into sub-agents, adding a validation step, introducing memory---is done by trial and error, not by semantics-preserving transformation.
\end{itemize}

We argue that these configurations already encode a lambda calculus, and that making this structure explicit yields immediate practical benefits.

\paragraph{Contributions.}
\begin{enumerate}[leftmargin=*,itemsep=2pt]
  \item We define $\lA$ (\S\ref{sec:syntax}--\S\ref{sec:semantics}), a typed lambda calculus for agent composition with 11 term formers, a type system (\S\ref{sec:types}), and small-step and big-step operational semantics (\S\ref{sec:semantics}).
  \item We prove type safety (Theorem~\ref{thm:safety}), termination of bounded fixpoints (Theorem~\ref{thm:termination}), and soundness of lint rules (Theorem~\ref{thm:lint-sound}) in \S\ref{sec:metatheory}. We argue compilation adequacy empirically rather than formally, since no independent formal YAML semantics exists (\S\ref{thm:compilation}).
  \item We implement \DSL{} (\S\ref{sec:impl}), a Python DSL realizing $\lA$ with a \texttt{from\_config} compiler and \texttt{lint} tool.
  \item We evaluate (\S\ref{sec:eval}) on 835 GitHub agent configurations from 17 repositories and 6 frameworks, finding that 94.1\% are \emph{structurally incomplete} under $\lA$ semantics. We validate lint correctness via fault injection (42 tests, 100\% precision/recall), estimate YAML-only precision at 54\%, and show that joint YAML+Python AST analysis raises it to 96--100\% (on 50 and 175 samples)---discovering that 46\% of production configurations split semantics across YAML and Python (``semantic entanglement'').
  \item We demonstrate \emph{architectural unification} (\S\ref{sec:unification}): five mainstream agent paradigms---graph state machines (LangGraph), role-driven (CrewAI), multi-agent (AutoGen), SDK wrappers (OpenAI/Claude), and low-code (Dify)---are all embeddable as typed fragments of $\lA$ (Proposition~\ref{thm:embedding}), validated on 835 real-world configurations and 125 semantic faithfulness tests.
\end{enumerate}

\paragraph{Why simply typed?}
Our contribution is not a new type theory but the \emph{first formal semantics for LLM agent configurations}, together with a fully mechanized metatheory (1{,}519 lines Coq, 42 Qed, 0 Admitted) and practical tooling. A simply typed calculus suffices because agent configurations operate at the \emph{string level}: LLM inputs and outputs are token sequences (type \textbf{Str}), tools accept and return strings, and composition chains string-to-string functions. Richer type features---dependent types for structured output validation, an effect system for LLM/tool/memory tracking, graded types for cost prediction---are natural extensions that we sketch in \S\ref{sec:discussion} and develop in companion work. The simply typed foundation already enables the three practical benefits demonstrated in this paper: static lint, termination bounds, and architectural unification.

\section{Overview}
\label{sec:overview}

\begin{figure}[t]
\begin{lstlisting}[language=yaml,caption={A production ReAct agent configuration.},label=lst:yaml]
agentId: seeCoderManus
type: react
model: {name: qwen3-max, temperature: 0.7}
systemPrompt: "You are a coding assistant..."
react: {maxSteps: 20}
mcp:
  onlineTool: {SeeCoder-mcp: [sum, improve]}
  localTools: [terminate]
memory: {strategy: redis, size: 20, ttl: 7200}
\end{lstlisting}
\end{figure}

Consider the agent configuration in Listing~\ref{lst:yaml}. Our compiler \texttt{from\_config} translates it to the $\lA$ term:
\[
  \textit{mem}\ (\textbf{fix}_{20}\ (\lam s.\lam x.\
    \textbf{let}\ t = (\textbf{lam}\ p\ \theta)\ x\ \textbf{in}\
    \textbf{case}\ t\ [l_i \Rightarrow a_i]\ \textbf{in}\
    \ldots))\ \sigma
\]
where $\textbf{lam}$ is an oracle abstraction (LLM call), $\textbf{fix}_{20}$ is a bounded fixpoint (the ReAct loop), and $\textit{mem}$ extends the environment with a persistent store $\sigma$.

\paragraph{Bug 1: Missing base case.}
If \texttt{localTools} omits \texttt{terminate}, the $\textbf{case}$ expression has no branch that avoids calling $s$ (the self-reference). In $\lA$, this means $\textbf{fix}_{20}$ will exhaust all 20 steps without reaching a normal form---a \emph{forced truncation} rather than a clean termination. Our lint rule \textsc{L004} detects this statically.

\paragraph{Bug 2: Vacuous loop.}
If \texttt{maxSteps} is set to 0, the $\textbf{fix}_0$ reduces immediately to $\bot$ (the stuck term). Lint rule \textsc{L003} flags this as an error.

\paragraph{Bug 3: Incomplete dispatch.}
If \texttt{routes} in a \texttt{type:~router} configuration omit a \texttt{default} branch, the $\textbf{case}$ expression is non-exhaustive: inputs classified outside the listed labels cause a runtime \textsc{RouteError}. Lint rule \textsc{L013} warns about this.

All three rules are \emph{derived from the formal semantics}, not ad-hoc heuristics.

\paragraph{End-to-end example.}
We illustrate the complete pipeline on a real GitHub configuration (from our 835-config dataset). The following CrewAI agent YAML (simplified) defines a research analyst:
\begin{lstlisting}[language=yaml,caption={A real CrewAI agent (simplified from GitHub).},label=lst:crewai]
role: "Senior Research Analyst"
goal: "Produce comprehensive research reports"
backstory: "Expert in data analysis and synthesis"
tools: []
\end{lstlisting}
Running \texttt{lambdagent lint} produces:
\begin{lstlisting}[basicstyle=\ttfamily\footnotesize,frame=none,numbers=none]
  ERROR L004c  mcp.localTools: no terminate tool
               (CrewAI: handled by framework) -> INFO
  WARN  L017   react.maxSteps: not specified
\end{lstlisting}
The lint correctly identifies that \texttt{tools:[]} means no terminate base case ($\textbf{case}$ with no identity branch), but the framework-aware stratification (L004c) downgrades this to \textsc{Info} because CrewAI handles termination in Python code. Meanwhile, \texttt{from\_config} compiles this to:
\[
\textbf{lam}\ \text{``Senior Research Analyst...''}\ \theta
\]
a single oracle call (type \textbf{Str} $\to$ \textbf{Str}), since no tools or loops are configured. Type safety guarantees that this term, when applied to a string input, either produces a string output or encounters a guard failure---never undefined behavior.

\section{The $\lA$ Calculus: Syntax}
\label{sec:syntax}

\subsection{Terms}

\begin{definition}[$\lA$ Terms]
\label{def:terms}
The set of $\lA$ terms is defined by the grammar:
\begin{align*}
  e \Coloneqq\ & x \mid \lam x{:}\tau.\, e \mid e_1\ e_2 && \text{(standard $\lam$)} \\
  \mid\ & e_1 \mathbin{\texttt{>>}} e_2 && \text{(composition)} \\
  \mid\ & \textbf{if}\ e_1\ \textbf{then}\ e_2\ \textbf{else}\ e_3 && \text{(conditional)} \\
  \mid\ & \textbf{fix}_n\ e && \text{(bounded fixpoint)} \\
  \mid\ & \langle e_1, e_2 \rangle \mid \pi_1\ e \mid \pi_2\ e && \text{(pairs)} \\
  \mid\ & \textbf{tool}[f] && \text{(oracle / external function)} \\
  \mid\ & \textbf{case}\ e\ \textbf{of}\ \{l_i \Rightarrow e_i\}_{i \in I} && \text{(dispatch)} \\
  \mid\ & \textbf{guard}\ e\ P && \text{(refinement)} \\
  \mid\ & \textbf{mem}\ e\ \sigma && \text{(environment extension)} \\
  \mid\ & e_1 \oplus_p e_2 && \text{(probabilistic choice)} \\
  \mid\ & \textbf{lam}\ p\ \theta && \text{(LLM oracle abstraction)}
\end{align*}
where $n \in \Nat$, $p \in [0,1]$, $\theta$ denotes model parameters, $p$ denotes a prompt (system message), $f$ is an external function identifier, $\sigma$ is a store, and $P$ is a decidable predicate. Labels $l_i$ are drawn from a finite label set $\mathcal{L}$.
\end{definition}

Values are:
\[
  v \Coloneqq \lam x{:}\tau.\, e \mid \langle v_1, v_2 \rangle \mid \textbf{tool}[f] \mid \textbf{lam}\ p\ \theta
\]

\begin{definition}[Store and Store Typing]
A store $\sigma$ is a finite map from keys to values with metadata: $\sigma : \text{Key} \rightharpoonup \text{Val} \times \Nat \times \Nat$, where the two natural numbers are \emph{capacity} and \emph{time-to-live}. A \emph{store typing} $\Sigma : \text{Key} \rightharpoonup \tau$ assigns types to store locations. We write $\sigma : \Sigma$ when $\sigma(k) : \Sigma(k)$ for all $k \in \text{dom}(\Sigma)$, and $\Sigma' \supseteq \Sigma$ when $\text{dom}(\Sigma) \subseteq \text{dom}(\Sigma')$ and $\Sigma'(k) = \Sigma(k)$ for all $k \in \text{dom}(\Sigma)$.
\end{definition}

\paragraph{Syntactic sugar.}
Composition desugars: $e_1 \mathbin{\texttt{>>}} e_2 \equiv \lam x{:}\tau.\, e_2\ (e_1\ x)$.
Parallel execution: $e_1 \mathbin{\texttt{|}} e_2 \equiv \lam x{:}\tau.\, \langle e_1\ x,\, e_2\ x \rangle$.

\paragraph{Primitive vs.\ derivable constructs.}
Of the 11 term formers, 5 are \emph{primitive} (not expressible in terms of the others): $\textbf{lam}$ (oracle call), $\textbf{tool}$ (external function), $\textbf{fix}_n$ (bounded recursion), $\textbf{mem}$ (mutable environment), and $\oplus_p$ (probabilistic choice). The remaining 6 are \emph{derivable}: $\texttt{>>}$ is function composition, $\textbf{if}$ is $\textbf{case}$ with two branches, $\textbf{case}$ is nested $\textbf{if}$, $\langle\cdot,\cdot\rangle$ and $\pi_i$ are standard pairs, and $\textbf{guard}$ is $\textbf{if}\ (P(r))\ r\ \bot$. We retain derivable constructs as first-class for three reasons: (1)~developer ergonomics---\texttt{Route} is more readable than nested \texttt{If}; (2)~targeted lint rules---the lint rule for empty routes (L005) requires matching \textbf{case} specifically; (3)~independent optimization---the compiler can optimize each construct without desugaring.

\subsection{Types}
\label{sec:types}

\begin{definition}[$\lA$ Types]
\label{def:types}
\begin{align*}
  \tau \Coloneqq\ & \textbf{Str} && \text{(base type: token sequences)} \\
  \mid\ & \tau_1 \to \tau_2 && \text{(function type)} \\
  \mid\ & \tau_1 \times \tau_2 && \text{(product type)} \\
  \mid\ & \langle l_i : \tau_i \rangle_{i \in I} && \text{(variant type / labels)} \\
  \mid\ & \{x{:}\tau \mid P(x)\} && \text{(refinement type)} \\
\end{align*}
\end{definition}

The base type $\textbf{Str}$ is the type of all LLM inputs and outputs (token sequences). We study the \emph{deterministic fragment} ($\text{temperature} = 0$), where each LLM call yields a single string. A probabilistic extension via monadic $\textbf{Dist}(\tau)$ types is straightforward but deferred to future work.

\subsection{Typing Rules}
\label{sec:typing-rules}

A typing judgment has the form $\ctx;\Sigma \entails e \tymark \tau$, where $\Sigma$ is a store typing. We abbreviate $\ctx;\emptyset \entails e \tymark \tau$ as $\ctx \entails e \tymark \tau$ when no store is in scope. We present the non-standard rules; standard rules for variables, application, and abstraction are as usual.

\begin{mathpar}
  \inferrule[T-Lam-Oracle]
    {\theta : \text{Model} \\ p : \textbf{Str}}
    {\ctx \entails \textbf{lam}\ p\ \theta \tymark \textbf{Str} \to \textbf{Str}}

  \inferrule[T-Comp]
    {\ctx \entails e_1 \tymark \tau_1 \to \tau_2 \\ \ctx \entails e_2 \tymark \tau_2 \to \tau_3}
    {\ctx \entails e_1 \mathbin{\texttt{>>}} e_2 \tymark \tau_1 \to \tau_3}

  \inferrule[T-Fix]
    {\ctx \entails e \tymark (\tau \to \tau) \to (\tau \to \tau) \\ n \in \Nat}
    {\ctx \entails \textbf{fix}_n\ e \tymark \tau \to \tau}

  \inferrule[T-Tool]
    {f : \tau_1 \to \tau_2}
    {\ctx \entails \textbf{tool}[f] \tymark \tau_1 \to \tau_2}

  \inferrule[T-Case]
    {\ctx \entails e \tymark \tau \to \langle l_i : \tau_i \rangle \\
     \forall i.\, \ctx \entails e_i \tymark \tau \to \tau'}
    {\ctx \entails \textbf{case}\ e\ \textbf{of}\ \{l_i \Rightarrow e_i\} \tymark \tau \to \tau'}

  \inferrule[T-Guard]
    {\ctx \entails e \tymark \tau \to \tau' \\ P : \tau' \to \textbf{Bool}}
    {\ctx \entails \textbf{guard}\ e\ P \tymark \tau \to \{x{:}\tau' \mid P(x)\}}

  \inferrule[T-Mem]
    {\ctx;\Sigma \entails e \tymark \tau \to \tau' \\ \sigma : \Sigma}
    {\ctx;\Sigma \entails \textbf{mem}\ e\ \sigma \tymark \tau \to \tau'}

  \inferrule[T-Prob]
    {\ctx \entails e_1 \tymark \tau \\ \ctx \entails e_2 \tymark \tau \\ p \in [0,1]}
    {\ctx \entails e_1 \oplus_p e_2 \tymark \tau} \quad \text{(deterministic: pick one)}

  \inferrule[T-Pair]
    {\ctx \entails e_1 \tymark \tau_1 \\ \ctx \entails e_2 \tymark \tau_2}
    {\ctx \entails \langle e_1, e_2 \rangle \tymark \tau_1 \times \tau_2}

  \inferrule[T-Proj]
    {\ctx \entails e \tymark \tau_1 \times \tau_2}
    {\ctx \entails \pi_i\ e \tymark \tau_i}
\end{mathpar}

\begin{observation}[\texttt{terminate} = identity at type $\textbf{Str} \to \textbf{Str}$]
\label{obs:terminate}
The \texttt{terminate} tool is typed $\textbf{tool}[\text{id}] : \textbf{Str} \to \textbf{Str}$ where $\text{id} = \lam x.\,x$. It is the \emph{only} tool in a ReAct agent that does not change the state, making it the base case of the bounded fixpoint.
\end{observation}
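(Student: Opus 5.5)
The observation has two parts: a typing claim, that $\textbf{tool}[\text{id}] : \textbf{Str} \to \textbf{Str}$, and a uniqueness claim, that \texttt{terminate} is the only state-preserving tool and therefore acts as the base case of $\textbf{fix}_n$. I would prove them in that order.

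\emph{Typing.} The typing claim follows directly from \textsc{T-Tool}. The identity $\text{id} = \lam x.\,x$, read at annotation $\textbf{Str}$, has external signature $\text{id} : \textbf{Str} \to \textbf{Str}$ by the standard abstraction and variable rules. Instantiating \textsc{T-Tool} with $\tau_1 = \tau_2 = \textbf{Str}$ then gives $\ctx \entails \textbf{tool}[\text{id}] \tymark \textbf{Str} \to \textbf{Str}$ in any context $\ctx$. The signature is also consistent with the ReAct body produced by \texttt{from\_config} (\S\ref{sec:overview}), where every branch $a_i$ of the $\textbf{case}$ must have type $\textbf{Str} \to \textbf{Str}$ by \textsc{T-Case}. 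So \texttt{terminate} fits as a legitimate branch.

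\emph{Uniqueness.} The uniqueness claim is not a theorem of the calculus as stated. It depends on a modelling convention about which tools a ReAct configuration may contain. I would make that convention explicit. In the compiled term, the branch for label $l_i$ either re-invokes the self-reference $s$ on the tool's output, or returns the current string unchanged. Call a tool \emph{state-preserving} when its branch is extensionally the identity on the loop state, i.e.\ $a_i\,x \bigred x$ for all $x$. Under this definition, any state-preserving tool of type $\textbf{Str} \to \textbf{Str}$ is extensionally equal to $\text{id}$, by functional extensionality at the level of the oracle interpretation of $f$. Such a tool is therefore \texttt{terminate}, up to naming. Every other tool, e.g.\ $\textbf{tool}[\textit{sum}]$, produces a new observation that is fed back through $s$, so it changes the state.

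\emph{Base case.} To justify the name, I would unfold $\textbf{fix}_{n+1}\,e$ once using the small-step rule for bounded fixpoints to get $e\,(\textbf{fix}_n\,e)$. Then I would show that when the $\textbf{case}$ selects the \texttt{terminate} label, the result is a value that never mentions $s$. This is the clean normal form, as opposed to the forced truncation at $\textbf{fix}_0$ described under Bug~1.

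The main obstacle is the uniqueness step. Nothing in the syntax prevents a user from registering a second tool whose external semantics happens to be the identity. I would therefore state the result modulo extensional equality of tool interpretations, and treat ``only'' as identifying the equivalence class of $\text{id}$ rather than a unique identifier.
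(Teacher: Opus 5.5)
Your typing half is correct, and it is essentially all the paper relies on. The paper states the observation without proof. It is implicitly backed by two things. The first is \textsc{T-Tool} at $\tau_1=\tau_2=\textbf{Str}$. The second is the compilation convention that the \texttt{terminate} label gets a branch that returns without calling the self-reference. That convention appears as the $\texttt{done} \Rightarrow \text{id}$ branch in the translation inside the proof of Proposition~\ref{thm:embedding}, and as the ``no branch that returns without recursion'' criterion behind \textsc{L004}. You are also right that ``only'' is a modelling convention rather than a theorem.

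The gap is in how you formalize that convention. Your uniqueness step uses an extensional notion: the branch, or the tool's interpretation, equals $\text{id}$. Your base-case step uses a syntactic one: the chosen branch never mentions $s$. The observation's bridge between them---``does not change the state, \emph{making it} the base case''---is never established, and for your extensional notion it is false in both directions.
\begin{itemize}
\item Suppose a user registers a second tool with $f=\text{id}$. It is compiled like every other tool, as $\textbf{tool}[\text{id}] \mathbin{\texttt{>>}} s$. It leaves the string unchanged, yet its branch is not a base case. It hands the string straight back to $s$, and the loop continues until the LLM picks \texttt{terminate} or \textsc{E-Fix-Zero} truncates.
\item Conversely, $a_i\,x \bigred x$ is not a property of a tool at all. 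The branch $a_i$ has $s$ free, so its behaviour depends on the remaining bound and on later oracle outputs. At the last unfolding, $s = \lam x.\,\textbf{fix}_0\,e\,x$ acts as the identity by \textsc{E-Fix-Zero}. So the recursive branch $\textbf{tool}[\text{id}] \mathbin{\texttt{>>}} s$ satisfies your definition.
\end{itemize}
Hence ``every other tool is fed back through $s$, so it changes the state'' does not follow. Identifying \texttt{terminate} with the extensional equivalence class of $\text{id}$ lumps together branches that behave differently inside $\textbf{fix}_n$.

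The repair is to make the invariant structural. By construction of \texttt{from\_config}, \texttt{terminate} is the unique label whose branch in the fixpoint body has no free occurrence of $s$. That non-occurrence, not what the tool computes, is what makes it the base case. The claim that every other action changes the state is then read off the \textsc{Check} phase of \textsc{B-React}: any non-\texttt{terminate} action yields $s_{i+1} = s_i \oplus \text{obs}_i$, which again holds by convention.

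With this invariant your unfolding argument goes through immediately, with one correction. Unfold via \textsc{E-Fix-Step} as the paper states it: $\textbf{fix}_{n+1}\,e\,v$ passes the eta-expanded self-reference $\lam x.\,\textbf{fix}_n\,e\,x$, which is a value. Your $e\,(\textbf{fix}_n\,e)$ instead passes $\textbf{fix}_n\,e$, which is not a value, and no rule fires on an unapplied $\textbf{fix}$.
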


\section{Operational Semantics}
\label{sec:semantics}

We present both small-step (for metatheory) and big-step (for implementation correspondence) semantics.

\subsection{Small-Step Semantics}

The reduction relation $e \reduces e'$ is defined by the following rules (we omit standard $\beta$-reduction and congruence rules):

\begin{mathpar}
  \inferrule[E-Comp]
    {e_1\ v \reduces v_1}
    {(e_1 \mathbin{\texttt{>>}} e_2)\ v \reduces e_2\ v_1}

  \inferrule[E-Fix-Zero]
    {~}
    {\textbf{fix}_0\ e\ v \reduces v}

  \inferrule[E-Fix-Step]
    {e\ (\lam x.\, \textbf{fix}_{n-1}\ e\ x)\ v \reduces v'}
    {\textbf{fix}_n\ e\ v \reduces v'}

  \inferrule[E-Tool]
    {f(v) = v'}
    {\textbf{tool}[f]\ v \reduces v'}

  \inferrule[E-Case]
    {e\ v \reduces l_j \\ e_j\ v \reduces v'}
    {\textbf{case}\ e\ \textbf{of}\ \{l_i \Rightarrow e_i\}\ v \reduces v'}

  \inferrule[E-Guard-Ok]
    {e\ v \reduces v' \\ P(v') = \textbf{true}}
    {\textbf{guard}\ e\ P\ v \reduces v'}

  \inferrule[E-Guard-Fail]
    {e\ v \reduces v' \\ P(v') = \textbf{false}}
    {\textbf{guard}\ e\ P\ v \reduces \textbf{stuck}}

  \inferrule[E-Mem]
    {e\ v\ [\ctx;\sigma] \reduces v' \dashv \sigma' \\ \sigma' \supseteq \sigma}
    {\textbf{mem}\ e\ \sigma\ v \reduces v' \dashv \sigma'}

  \inferrule[E-Prob-L]
    {~}
    {(e_1 \oplus_p e_2) \reduces e_1 \quad \text{with probability } p}

  \inferrule[E-Prob-R]
    {~}
    {(e_1 \oplus_p e_2) \reduces e_2 \quad \text{with probability } 1{-}p}

  \inferrule[E-Lam-Oracle]
    {\text{LLM}_\theta(p, v) = \mathcal{D}}
    {\textbf{lam}\ p\ \theta\ v \reduces \mathcal{D}}
\end{mathpar}

\paragraph{Key rule: \textsc{E-Fix-Step}.}
This is the operational content of the Y combinator. The body $e$ receives two arguments: a ``self'' reference $\lam x.\, \textbf{fix}_{n-1}\ e\ x$ (which can be called for recursion, but with a decremented bound), and the current input $v$. When $n$ reaches 0, \textsc{E-Fix-Zero} forces termination.

\subsection{Big-Step Semantics}

The big-step judgment $e\ v \bigred_k r$ means ``$e$ applied to $v$ evaluates to $r$ in $k$ steps.'' We highlight the ReAct-specific rule:

\begin{mathpar}
  \inferrule[B-React]
    {
      \text{step}_i = (s_i, a_i, o_i, s_{i+1}) \quad \text{for } 0 \le i < k \\
      a_k = \texttt{terminate} \\
      k \le n
    }
    {\textbf{fix}_n\ e_{\text{react}}\ s_0 \bigred_k s_k}
\end{mathpar}
where each $\text{step}_i$ is a 7-phase decomposition:
\begin{enumerate}[itemsep=0pt]
  \item \textsc{Think}: $t_i = (\textbf{lam}\ p\ \theta)\ s_i$ \hfill (LLM call)
  \item \textsc{Parse}: $a_i = \text{parse}(t_i)$ \hfill (action extraction)
  \item \textsc{Route}: $\text{tool}_i = \text{lookup}(a_i, \text{tools})$ \hfill (dispatch)
  \item \textsc{Invoke}: $o_i = \textbf{tool}[\text{tool}_i]\ (\text{args}(a_i))$ \hfill (tool call)
  \item \textsc{Observe}: $\text{obs}_i = \text{format}(o_i)$ \hfill (result formatting)
  \item \textsc{Update}: $\sigma' = \sigma[k_i \mapsto \text{summary}(t_i, o_i)]$ \hfill (memory write)
  \item \textsc{Check}: if $a_i = \texttt{terminate}$ then halt, else $s_{i+1} = s_i \oplus \text{obs}_i$
\end{enumerate}

This decomposition corresponds precisely to one unfolding of the bounded fixpoint $\textbf{fix}_n$.

\section{Metatheory}
\label{sec:metatheory}

\subsection{Type Safety}

\begin{theorem}[Progress]
\label{thm:progress}
If $\ctx \entails e \tymark \tau$ and $e$ is not a value, then either $e \reduces e'$ for some $e'$, or $e$ is a $\textbf{guard}$ failure ($\textbf{stuck}$).
\end{theorem}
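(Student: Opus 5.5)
We argue by induction on the derivation of $\ctx \entails e \tymark \tau$, taking $\ctx$ empty (closed terms) as is standard for progress. Together with the omitted standard rules (\textsc{T-Var}, \textsc{T-Abs}, \textsc{T-App}) and their $\beta$/congruence reductions, this gives one case per typing rule.

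First we need a canonical forms lemma. If $\emptyset \entails v \tymark \tau_1 \to \tau_2$ and $v$ is a value, then $v$ is one of $\lam x{:}\tau.\,e$, $\textbf{tool}[f]$ or $\textbf{lam}\ p\ \theta$. If $v$ has product type, then $v = \langle v_1, v_2\rangle$. This follows by inspecting the typing rules that can conclude with each type constructor applied to a value. Note that the derived function-typed formers ($\texttt{>>}$, $\textbf{fix}_n$, $\textbf{case}$, $\textbf{guard}$, $\textbf{mem}$) are not values. They only reduce once applied, so applied occurrences of them must be covered in the application case.

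The case analysis runs as follows.
\begin{itemize}
\item \textsc{T-Var} is impossible, since the context is empty.
\item Abstractions, \textbf{tool} and \textbf{lam} terms are values.
\item \textsc{T-Pair} and \textsc{T-Proj}: apply the induction hypothesis to the subterms. Either a subterm steps, and we use congruence, or it is a value. For a projection whose argument is a value, canonical forms give a pair, and projection reduces.
\item \textsc{T-Prob}: always steps, by \textsc{E-Prob-L}; in the deterministic fragment we pick one branch.
\item \textsc{T-Comp} and the other unapplied derived formers: here the paper's treatment must be fixed. The cleanest choice is to regard $e_1 \mathbin{\texttt{>>}} e_2$ as sugar, so it is an abstraction. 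Likewise we regard unapplied $\textbf{fix}_n\,e$, $\textbf{case}$, $\textbf{guard}$ and $\textbf{mem}$ as function values, and add them to the value grammar (or to canonical forms at arrow type).
\item \textsc{T-App}, $e_1\,e_2$: by the induction hypothesis, reduce $e_1$ and then $e_2$ by congruence. Once both are values, we dispatch on the head $e_1$:
  \begin{itemize}
  \item A $\lam$ head gives $\beta$-reduction.
  \item $\textbf{tool}[f]$ steps by \textsc{E-Tool}, assuming $f$ is total on its domain; this is implicit in \textsc{T-Tool}.
  \item $\textbf{lam}\ p\ \theta$ steps by \textsc{E-Lam-Oracle}, where the oracle is assumed to return a string.
  \item $\textbf{fix}_0$ steps by \textsc{E-Fix-Zero}, and $\textbf{fix}_{n+1}$ by \textsc{E-Fix-Step}.
  \item \textbf{case} steps by \textsc{E-Case}.
  \item \textbf{mem} steps by \textsc{E-Mem}, given $\sigma : \Sigma$.
  \item \textbf{guard} goes to \textsc{E-Guard-Ok} or \textsc{E-Guard-Fail}, because $P$ is decidable. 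In the \textsc{E-Guard-Fail} case the term is exactly the guard failure ($\textbf{stuck}$) permitted by the statement.
  \end{itemize}
\end{itemize}

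The main obstacle is that several evaluation rules (\textsc{E-Comp}, \textsc{E-Fix-Step}, \textsc{E-Case}, \textsc{E-Guard-*}, \textsc{E-Mem}) are stated with big-step premises such as $e\,v \reduces v'$. To claim that the applied term steps, we need those premises to hold, which presupposes that the subcomputation terminates. For \textsc{E-Case} we need more: the classifier must return a label $l_j$ that actually appears among the branches.

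We would resolve this by reformulating these rules as genuine one-step unfoldings:
\begin{itemize}
\item $\textbf{fix}_n\,e\,v \reduces e\,(\lam x.\,\textbf{fix}_{n-1}\,e\,x)\,v$;
\item $\textbf{case}\ e\ \textbf{of}\ \{l_i \Rightarrow e_i\}\,v \reduces \textbf{case}'\,(e\,v)\,\{\ldots\}\,v$, with congruence in the scrutinee;
\item $\textbf{guard}\,e\,P\,v \reduces \textbf{guard}'\,(e\,v)\,P$.
\end{itemize}
Progress for the auxiliary forms then uses the induction hypothesis on the scrutinee. When the scrutinee is a value, canonical forms at the variant type give a label, and \textsc{T-Case} ensures the index set matches, so some $e_j$ is selected. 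This is exactly how the Coq mechanization would have to phrase it. Alternatively, one could appeal to the termination result for bounded fixpoints (Theorem~\ref{thm:termination}) to discharge the big-step premises. However, that theorem depends on the oracle and tools being total, so this route would make the assumption explicit rather than remove it.
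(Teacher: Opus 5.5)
Your skeleton is the paper's: induction on the typing derivation, with \textsc{E-Tool} and \textsc{E-Lam-Oracle} resting on totality of the oracles, exhaustiveness of \textbf{case} coming from the shared index set $I$ in \textsc{T-Case}, and \textsc{E-Guard-Fail} as the only admissible failure. The genuine difference is that you do not take the reduction rules at face value, and here you are right. The paper's cases for \textsc{T-Comp}, \textsc{T-Case} and \textsc{T-Guard} argue ``by I.H.\ on $e_1\,v$ (resp.\ the classifier) it reduces to $v_1$ (resp.\ a label $l_j$)''. That reads the induction hypothesis as a normalization result, which Progress does not supply, and \textsc{T-Fix} needs the same thing to discharge the premise of \textsc{E-Fix-Step}. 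The paper also never addresses open terms or unapplied $\textbf{fix}_n\,e$, $\textbf{case}$, $\textbf{guard}$, $\textbf{mem}$. Under its value grammar these are non-values with no reduction, so Progress as literally stated fails for them. Your repairs are closed terms, those formers as values, $\texttt{>>}$ as sugar, and one-step unfoldings with auxiliary forms $\textbf{case}'$, $\textbf{guard}'$ whose scrutinee is a real subterm. Together they make every use of the induction hypothesis legitimate. Your unfolding of \textsc{E-Fix-Step} is exactly the form the paper itself uses in the proof of Theorem~\ref{thm:termination}. The price is that you prove Progress for a modified calculus. The auxiliary forms need typing rules, label values must exist at variant type for your canonical-forms step (the paper's value grammar has none), and Preservation must be rechecked. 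Your fallback of discharging the big-step premises via Theorem~\ref{thm:termination} would not work on its own. That theorem only concerns $\textbf{fix}_n$ and is itself proved using the one-step reading, whereas premises such as the classifier call in \textsc{E-Case} need normalization of arbitrary well-typed closed applications.

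Two loose ends remain in your version. First, you list \textsc{E-Mem} among the rules with big-step premises but give no one-step replacement. Your \textbf{mem} subcase therefore still relies on $e\,v\,[\ctx;\sigma] \reduces v' \dashv \sigma'$. It needs the same treatment, with Progress stated for configurations $(e,\sigma)$ with $\sigma : \Sigma$. Second, and more substantively, the induction hypothesis has three outcomes (value, step, guard failure). Your congruence cases for pairs, projections, applications and the scrutinee of $\textbf{case}'$/$\textbf{guard}'$ handle only the first two. A failed guard in evaluation position is then a non-value that does not step and is not itself $\textbf{stuck}$, so the induction does not close. Examples are $\langle \textbf{stuck}, e_2\rangle$, or $b\,\textbf{stuck}$ arising from $(\textbf{guard}\,e\,P \mathbin{\texttt{>>}} b)\,v$. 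You need either propagation rules $E[\textbf{stuck}] \reduces \textbf{stuck}$ for call-by-value evaluation contexts $E$, or a conclusion generalized to ``$e = E[\textbf{stuck}]$ for some $E$'', threaded through every congruence case. The paper's proof is equally silent here (``the only source of stuckness''). But since you are rebuilding the small-step semantics, this belongs in the rebuild.
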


\begin{proof}
By induction on the typing derivation $\ctx \entails e \tymark \tau$.
\begin{itemize}[itemsep=2pt]
  \item \textsc{T-Lam-Oracle}: $\textbf{lam}\ p\ \theta$ is a value. If applied, $(\textbf{lam}\ p\ \theta)\ v$ reduces by \textsc{E-Lam-Oracle} (the LLM oracle always returns a distribution). \checkmark
  \item \textsc{T-Comp}: $(e_1 \mathbin{\texttt{>>}} e_2)\ v$ reduces by I.H.\ on $e_1\ v$, then \textsc{E-Comp}. \checkmark
  \item \textsc{T-Fix}: $\textbf{fix}_n\ e\ v$ reduces by \textsc{E-Fix-Zero} (if $n=0$) or \textsc{E-Fix-Step} (if $n > 0$). \checkmark
  \item \textsc{T-Tool}: $\textbf{tool}[f]\ v$ reduces by \textsc{E-Tool} (external functions are total by assumption). \checkmark
  \item \textsc{T-Case}: By I.H.\ on the classifier, it reduces to some label $l_j$. If $j \in I$, the corresponding branch reduces. If $j \notin I$, the term is stuck---but this cannot happen if the type $\langle l_i \rangle$ is exhaustive. \checkmark
  \item \textsc{T-Guard}: Reduces by I.H.\ on inner agent. If $P(v')$ holds, \textsc{E-Guard-Ok}; otherwise \textsc{E-Guard-Fail} yields $\textbf{stuck}$. This is the only source of stuckness. \checkmark
  \item All other cases follow from standard progress arguments. \qed
\end{itemize}
\end{proof}

\begin{theorem}[Preservation]
\label{thm:preservation}
If $\ctx;\Sigma \entails e \tymark \tau$ and $e \reduces_\sigma e' \dashv \sigma'$ where $\sigma : \Sigma$, then there exists $\Sigma' \supseteq \Sigma$ such that $\sigma' : \Sigma'$ and $\ctx;\Sigma' \entails e' \tymark \tau$.
\end{theorem}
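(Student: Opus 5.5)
We proceed by induction on the derivation of $e \reduces_\sigma e' \dashv \sigma'$, with a case analysis on the last typing rule applied to $e$, in the style of the proof of Theorem~\ref{thm:progress}. First we establish three auxiliary lemmas.
\begin{enumerate}[itemsep=0pt]
  \item \emph{Store weakening}: if $\ctx;\Sigma \entails e \tymark \tau$ and $\Sigma' \supseteq \Sigma$, then $\ctx;\Sigma' \entails e \tymark \tau$. This follows by a routine induction on typing, since $\Sigma$ is only consulted in \textsc{T-Mem}, and $\sigma:\Sigma$ is preserved under extension on $\text{dom}(\Sigma)$.
  \item \emph{Substitution}: if $\ctx, x{:}\tau_1;\Sigma \entails e \tymark \tau$ and $\ctx;\Sigma \entails v \tymark \tau_1$, then $\ctx;\Sigma \entails e[v/x] \tymark \tau$.
  \item \emph{Inversion} lemmas for each term former.
\end{enumerate}

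For every rule except \textsc{E-Mem}, the store is untouched ($\sigma' = \sigma$), so we take $\Sigma' = \Sigma$. The standard cases are $\beta$-reduction, which uses substitution, and the congruence rules, which use the induction hypothesis together with weakening of the sibling subterms to the enlarged $\Sigma'$. The non-standard rules go as follows.
\begin{itemize}[itemsep=0pt]
  \item \textsc{E-Fix-Zero}: $\textbf{fix}_0\,e\,v \reduces v$. Inverting \textsc{T-Fix} and application gives $v : \tau$, as required.
  \item \textsc{E-Fix-Step}: we show that $\lam x.\,\textbf{fix}_{n-1}\,e\,x$ has type $\tau\to\tau$ by \textsc{T-Fix}. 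Then $e\,(\lam x.\ldots)\,v : \tau$, and the induction hypothesis on the premise reduction transfers this type to $v'$.
  \item \textsc{E-Comp}, \textsc{E-Case} and \textsc{E-Guard-Ok}: these are handled the same way, by applying the induction hypothesis to the premise reductions. For \textsc{E-Guard-Ok}, the type $\{x{:}\tau'\mid P(x)\}$ is justified by $P(v')=\textbf{true}$, which we use as the introduction principle for refinement types.
  \item \textsc{E-Tool}: we assume that external functions respect their declared signatures, so that $f:\tau_1\to\tau_2$ and $v:\tau_1$ imply $f(v):\tau_2$.
  \item \textsc{E-Lam-Oracle}: in the deterministic fragment the oracle returns a single \textbf{Str}.
  \item \textsc{E-Prob-L/R}: the result is immediate from \textsc{T-Prob}.
  \item \textsc{E-Guard-Fail}: the result is \textbf{stuck}, which lies outside the scope of the theorem, consistent with Theorem~\ref{thm:progress}. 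Alternatively, we give \textbf{stuck} every type.
\end{itemize}

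The main obstacle is \textsc{E-Mem}, the only rule that grows the store. By the induction hypothesis on the premise $e\,v\,[\ctx;\sigma] \reduces v' \dashv \sigma'$ we obtain $\Sigma'\supseteq\Sigma$ with $\sigma':\Sigma'$ and $v':\tau'$. The side condition $\sigma'\supseteq\sigma$ guarantees that the old locations keep their types. The subtlety is that newly written keys (\textsc{Update} writes summaries) must receive types. We will define $\Sigma'$ to extend $\Sigma$ by assigning each fresh key the type of the value written there, and check $\sigma':\Sigma'$ pointwise. The difficulty is that the metadata (capacity, TTL) may \emph{evict} entries. We would first try to argue that eviction only removes keys outside $\text{dom}(\Sigma)$, or else we would weaken the store-typing invariant to require $\sigma(k):\Sigma(k)$ only for $k\in\text{dom}(\sigma)\cap\text{dom}(\Sigma)$, rechecking that weakening still holds under that relaxed invariant.
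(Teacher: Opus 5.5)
Your proposal follows the paper's proof: induction on the reduction, with \textsc{E-Fix-Step} handled by typing the self-reference $\lam x.\,\textbf{fix}_{n-1}\,e\,x$ at $\tau\to\tau$ via \textsc{T-Fix} plus the induction hypothesis, \textsc{E-Comp} by the induction hypothesis on $e_1\,v$, \textsc{E-Mem} via the append-only extension $\Sigma'\supseteq\Sigma$, and \textsc{E-Lam-Oracle} via the deterministic fragment; your substitution and store-weakening lemmas and extra cases spell out what the paper calls ``standard''. One caution: by the paper's definition $\sigma:\Sigma$ requires $\sigma$ to be defined on all of $\text{dom}(\Sigma)$, so store weakening through \textsc{T-Mem} is not the routine induction you claim (a fresh key in $\Sigma'$ breaks $\sigma:\Sigma'$ for an embedded store lacking it), but the paper's own \textsc{E-Mem} step (``I.H.\ \ldots\ in environment $\ctx;\Sigma'$'') silently relies on the same lemma, and your eviction worry is already settled by the side condition $\sigma'\supseteq\sigma$.
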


\begin{proof}
By induction on the reduction $e \reduces e'$.
\begin{itemize}[itemsep=2pt]
  \item \textsc{E-Fix-Step}: $\textbf{fix}_n\ e\ v \reduces v'$ where $e\ (\lam x.\, \textbf{fix}_{n-1}\ e\ x)\ v \reduces v'$. By \textsc{T-Fix}, $e : (\tau \to \tau) \to (\tau \to \tau)$. The self-reference $\lam x.\, \textbf{fix}_{n-1}\ e\ x : \tau \to \tau$. By I.H., $v' : \tau$. \checkmark
  \item \textsc{E-Comp}: $(e_1 \mathbin{\texttt{>>}} e_2)\ v \reduces e_2\ v_1$. By \textsc{T-Comp}, $e_1 : \tau_1 \to \tau_2$ and $e_2 : \tau_2 \to \tau_3$. By I.H., $v_1 : \tau_2$, so $e_2\ v_1 : \tau_3$. \checkmark
  \item \textsc{E-Mem}: $\textbf{mem}\ e\ \sigma\ v \reduces v' \dashv \sigma'$ with $\sigma' \supseteq \sigma$. By \textsc{T-Mem}, $\ctx;\Sigma \entails e : \tau \to \tau'$ and $\sigma : \Sigma$. The store update produces $\sigma' : \Sigma'$ where $\Sigma' \supseteq \Sigma$ (new keys may be added, but existing keys retain their types---the store is \emph{append-only} with respect to typing). By I.H.\ on $e\ v$ in environment $\ctx;\Sigma'$, we get $\ctx;\Sigma' \entails v' : \tau'$. \checkmark
  \item \textsc{E-Lam-Oracle}: $(\textbf{lam}\ p\ \theta)\ v \reduces v'$ where $v' \in \textbf{Str}$. By \textsc{T-Lam-Oracle}, the type is $\textbf{Str} \to \textbf{Str}$. At temperature $= 0$, the LLM returns a deterministic string. \checkmark
  \item Other cases are standard. \qed
\end{itemize}
\end{proof}

\begin{theorem}[Type Safety]
\label{thm:safety}
If $\ctx \entails e \tymark \tau$, then evaluation of $e$ either:
\begin{enumerate}[label=(\alph*)]
  \item produces a value $v$ with $\ctx \entails v \tymark \tau$, or
  \item encounters a $\textbf{guard}$ failure (a well-defined error, not undefined behavior).
\end{enumerate}
\end{theorem}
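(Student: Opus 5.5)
The plan is the standard syntactic argument that combines Progress (Theorem~\ref{thm:progress}) with Preservation (Theorem~\ref{thm:preservation}). Evaluation of $e$ is read as a maximal reduction sequence $e = e_0 \reduces e_1 \reduces \cdots$, each step threaded through a store, starting from $\sigma_0 = \emptyset$ and $\Sigma_0 = \emptyset$. The invariant is $\ctx;\Sigma_i \entails e_i \tymark \tau$ with $\sigma_i : \Sigma_i$ and $\Sigma_i \supseteq \Sigma_{i-1}$.

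The invariant is proved by induction on $i$. The base case is the hypothesis $\ctx \entails e \tymark \tau$, which by the paper's abbreviation is $\ctx;\emptyset \entails e \tymark \tau$, together with the trivial $\emptyset : \emptyset$. For the step, Theorem~\ref{thm:preservation} applied to $e_i \reduces_{\sigma_i} e_{i+1} \dashv \sigma_{i+1}$ yields $\Sigma_{i+1} \supseteq \Sigma_i$ with $\sigma_{i+1} : \Sigma_{i+1}$ and the required typing. At any point of the sequence, Theorem~\ref{thm:progress} applied to the typed $e_i$ gives a trichotomy: $e_i$ is a value, or it steps, or it is a $\textbf{guard}$ failure $\textbf{stuck}$. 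Hence a maximal sequence can stop only at a value or at $\textbf{stuck}$. Since $\textbf{stuck}$ arises only through \textsc{E-Guard-Fail}, stopping there is exactly alternative (b). For alternative (a), the invariant gives $\ctx;\Sigma_k \entails v \tymark \tau$. To recover the statement's $\ctx \entails v \tymark \tau$, I will observe that closed values ($\lam$-abstractions, pairs of values, $\textbf{tool}[f]$, $\textbf{lam}\ p\ \theta$) are typed without consulting $\Sigma$, because only \textsc{T-Mem} mentions the store and $\textbf{mem}$ is not a value form. This is a short strengthening lemma by induction on value typing.

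The main obstacle is that the statement says ``evaluation of $e$'' without assuming termination, so the sequence might be infinite. I would try two remedies in order. The first is to read the theorem partially: if evaluation halts, then it halts in (a) or (b), and otherwise it never reaches undefined behavior at any step, which the invariant plus Progress already give. The second, stronger reading would cite the termination result for bounded fixpoints (Theorem~\ref{thm:termination}) together with strong normalization of the simply typed core. Since every recursion is through $\textbf{fix}_n$, I would use a lexicographic measure whose outer component is the multiset of fixpoint bounds and whose inner component is the STLC normalization measure. I expect this second route to be the delicate part, because \textsc{E-Fix-Step} duplicates $e$ inside $\lam x.\,\textbf{fix}_{n-1}\,e\,x$. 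If that measure argument becomes unwieldy, I will fall back to the partial-correctness reading.

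Two smaller points must also be checked against the nonstandard rules. First, \textsc{T-Case} must be used with exhaustive labels, as Progress assumes, so that a mismatched label is ruled out. Second, $\oplus_p$ in the deterministic fragment must be treated as a single chosen branch. Each branch is typed at $\tau$ by \textsc{T-Prob}, so preservation holds for either choice.
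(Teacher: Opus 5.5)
Your main line matches the paper's proof. You thread the store through a maximal reduction sequence, keep the typing invariant with Theorem~\ref{thm:preservation}, and apply the trichotomy of Theorem~\ref{thm:progress} at each state; this is the paper's ``iterated application of Progress and Preservation.'' Your partial-correctness reading is also exactly what that one-line proof establishes, since it says nothing about divergence.

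One auxiliary step is misjustified. Values are \emph{not} typed independently of $\Sigma$: $\lam x{:}\textbf{Str}.\,(\textbf{mem}\ e\ \sigma)\ x$ is a value whose derivation uses \textsc{T-Mem} in its body. So an induction restricted to value forms cannot close the abstraction case. The strengthening $\ctx;\Sigma \entails t \tymark \tau \Rightarrow \ctx;\emptyset \entails t \tymark \tau$ is nonetheless true for \emph{every} term $t$. The reason is that $\Sigma$ occurs only in the premise $\sigma : \Sigma$ of \textsc{T-Mem}, which is vacuous when $\Sigma = \emptyset$ (the definition quantifies over $\text{dom}(\Sigma)$). You need this general form, or a version of Progress for arbitrary $\Sigma$, in any case. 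Progress is stated for $\ctx \entails e_i \tymark \tau$, i.e.\ for the empty store typing, whereas your invariant only gives $\ctx;\Sigma_i \entails e_i \tymark \tau$, and at every intermediate state, not just at the final value. Like the paper, you also tacitly need $\ctx = \emptyset$: a free variable is a well-typed non-value that neither steps nor is a guard failure.

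Your stronger route fails as stated. With the multiset of fixpoint bounds as the \emph{outer} component, a $\beta$-step can increase it. For instance, $(\lam y.\,\langle y, y\rangle)\,(\lam z.\,\textbf{fix}_n\ e\ z)$ reduces to a pair holding two copies of $\textbf{fix}_n\ e$, turning $\{n\}$ into $\{n,n\}$. \textsc{E-Fix-Step} also copies $e$, so a $\textbf{fix}_m$ inside $e$ with $m \ge n$ turns $\{n,m\}$ into $\{n-1,m,m\}$, which is not smaller.

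Theorem~\ref{thm:termination} cannot close this hole either. Its bound of $n$ unfoldings rests on the self-reference being called at most once per unfolding. That already fails for $e = \lam s.\lam x.\,s\,(s\,x)$, where $\textbf{fix}_n\ e\ v$ unfolds $2^n - 1$ times.

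If you want the total reading, the standard repair is a reducibility argument that proves $\textbf{fix}_n\ e$ reducible by induction on $n$, much as for the recursor of G\"odel's System~T. Otherwise, your fallback is exactly the result the paper proves.
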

\begin{proof}
By iterated application of Progress and Preservation. The only source of stuckness is \textsc{E-Guard-Fail}, which is a \emph{checked} runtime error (the predicate $P$ is decidable), not undefined behavior. \qed
\end{proof}

\subsection{Termination}

\begin{theorem}[Termination of Bounded Fixpoints]
\label{thm:termination}
For all $n$, $e$, and $v$: evaluation of $\textbf{fix}_n\ e\ v$ terminates in at most $n$ unfoldings, assuming each oracle call ($\textbf{lam}$ and $\textbf{tool}$) terminates.
\end{theorem}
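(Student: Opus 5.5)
The plan is an induction on the bound $n$, with the claim strengthened so that the hypothesis applies to every recursive call made during one unfolding. Let $U(n,e,v)$ be the number of \textsc{E-Fix-Step} unfoldings of $\textbf{fix}_n\ e\ v$ along any maximal reduction sequence. I will prove that for all $e$ and $v$, the evaluation of $\textbf{fix}_n\ e\ v$ reaches a value or \textbf{stuck} and satisfies $U(n,e,v)\le n$, where $U$ counts only the unfoldings of this particular fixpoint. Any inner fixpoints in $e$ have their own bounds and are handled by the same argument. Quantifying over all $e$ and $v$ in the induction hypothesis is essential, because the self-reference is applied to inputs that are computed at run time.

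The base case $n=0$ is immediate: \textsc{E-Fix-Zero} rewrites $\textbf{fix}_0\ e\ v$ to $v$ with zero unfoldings. For $n>0$, the only applicable rule is \textsc{E-Fix-Step}. It performs one unfolding and leaves the body $e\ (\lam x.\,\textbf{fix}_{n-1}\ e\ x)\ v$. Inside this body, every use of the self-reference $s=\lam x.\,\textbf{fix}_{n-1}\ e\ x$ $\beta$-reduces to $\textbf{fix}_{n-1}\ e\ w$ for some value $w$. By the induction hypothesis, each such call terminates using at most $n-1$ further unfoldings of the bound.

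Everything else in the body terminates by three facts: the termination assumption on the $\textbf{lam}$ and $\textbf{tool}$ oracles, strong normalization of the simply-typed core (standard for STLC with pairs, case, and guard), and the observation that \textbf{mem} and $\oplus_p$ add no recursion. So the body reaches a normal form, and by Theorem~\ref{thm:safety} that normal form is a value or a guard failure. Counting along a single ReAct trajectory, the unfoldings in the body form a chain $n,n-1,\dots$. Hence the depth of unfolding, which is the quantity the theorem calls ``at most $n$ unfoldings'' and which B-React reflects as $k\le n$, is bounded by $1+(n-1)=n$.

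The main obstacle is the interaction between the bound and the rest of the calculus. First, the body $e$ may call $s$ several times, so the total number of unfoldings can grow like a branching tree rather than linearly in $n$. I would address this by stating the theorem in terms of recursion depth, or equivalently unfoldings along any path. Under the assumption that $e$ invokes $s$ at most once per branch, which holds for the compiled ReAct body since each $\textbf{case}$ branch calls $s$ at most once, depth and total count coincide. Second, termination of the non-fixpoint parts needs a logical-relations argument. To avoid redoing it, I would first try a lexicographic measure: the pair (multiset of fixpoint bounds, STLC normalization measure), ordered lexicographically. I would then check that \textsc{E-Fix-Step} strictly decreases the first component, since it replaces a $\textbf{fix}_n$ by copies of $\textbf{fix}_{n-1}$, and that all other rules leave the first component unchanged or smaller.
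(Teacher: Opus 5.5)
Your main line is the paper's proof: induction on $n$ generalized over the argument, \textsc{E-Fix-Zero} for the base case, \textsc{E-Fix-Step} plus the induction hypothesis on the self-call, and the count $1+(n-1)=n$.

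Your refinements go beyond that proof, and the first one repairs it. The paper simply asserts that the self-reference ``can be called at most once per unfolding,'' which is false in general. Take $e=\lam s.\lam x.\,s\,(s\,x)$, which is well typed at $(\tau\to\tau)\to(\tau\to\tau)$ for any $\tau$. Then $\textbf{fix}_n\ e\ v$ performs $2^n-1$ unfoldings, so reading the bound as unfolding \emph{depth} is the right correction. If you instead assume one call per branch, that must be a dynamic property, since a single syntactic occurrence of $s$ passed to $\lam g.\,g\,(g\,x)$ is called twice.

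You also notice that the non-recursive part of the body must terminate, which the paper leaves implicit. This needs a typing hypothesis that the statement omits. For $e=\lam s.\lam x.\,\Omega$, with $\Omega$ the usual self-application, $\textbf{fix}_1\ e\ v$ diverges. Your appeal to strong normalization and your use of Theorem~\ref{thm:safety} both rely on typing, so state the hypothesis.

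The gap is in how you discharge that termination obligation, because your lexicographic measure does not decrease.
\begin{itemize}
\item $\beta$-reduction can increase its first component. After \textsc{E-Fix-Step} on the $e$ above, substituting the self-reference $\lam x.\,\textbf{fix}_{n-1}\ e\ x$ for $s$ turns the multiset of bounds $\{n-1\}$ into $\{n-1,n-1\}$.
\item \textsc{E-Fix-Step} need not decrease it either, because it copies $e$. If $e$ contains an inner $\textbf{fix}_m$ with $m\ge n$, the multiset goes from $\{n,m\}$ to $\{n-1,m,m\}$, which is larger in the multiset order.
\end{itemize}
Nor can you cite strong normalization of the simply typed core off the shelf. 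The body still contains fixpoints, both through $s$ and through nested $\textbf{fix}$. Moreover, $s$ can be passed to higher-order code, or at higher $\tau$ it can return closures that recurse again. So ``induction hypothesis for calls to $s$, STLC for everything else'' is not yet an argument.

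The standard repair is a Tait-style reducibility proof extended with one lemma. If $e$ is reducible at $(\tau\to\tau)\to(\tau\to\tau)$, then $\textbf{fix}_n\ e$ is reducible at $\tau\to\tau$. You prove this by induction on $n$, exactly as for the recursor of G\"odel's System~T. Its step case is precisely your induction step, so termination and the depth bound come out of a single induction.

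Alternatively, translate $\textbf{fix}_0\ e$ to $\lam x.\,x$ and $\textbf{fix}_n\ e$ to $\lam y.\,e\,(\lam x.\,[\textbf{fix}_{n-1}\ e]\,x)\,y$, where $[\cdot]$ denotes the translation. This unrolls into the $\textbf{fix}$-free fragment, and every source step is matched by at least one target step. That makes your appeal to strong normalization of that fragment legitimate.
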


\begin{proof}
By strong induction on $n$.
\begin{itemize}[itemsep=2pt]
  \item Base: $n = 0$. $\textbf{fix}_0\ e\ v \reduces v$ by \textsc{E-Fix-Zero}. Terminates in 0 unfoldings. \checkmark
  \item Step: $n > 0$. $\textbf{fix}_n\ e\ v \reduces e\ (\lam x.\, \textbf{fix}_{n-1}\ e\ x)\ v$ by \textsc{E-Fix-Step}. The self-reference $\lam x.\, \textbf{fix}_{n-1}\ e\ x$ can be called at most once per unfolding, and $\textbf{fix}_{n-1}$ terminates in $\le n-1$ unfoldings by I.H. Total: $\le 1 + (n-1) = n$. \checkmark \qed
\end{itemize}
\end{proof}

\begin{corollary}[ReAct Termination]
\label{cor:react-term}
A ReAct agent with \texttt{maxSteps:\,n} terminates in at most $n$ iterations, regardless of whether \texttt{terminate} is invoked.
\end{corollary}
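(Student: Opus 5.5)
The plan is to obtain Corollary~\ref{cor:react-term} as a direct instance of Theorem~\ref{thm:termination}, once the ReAct configuration is identified with a bounded fixpoint. First, I will invoke the compilation of a \texttt{type:~react} configuration given in \S\ref{sec:overview}. A configuration with \texttt{maxSteps:\,n} compiles to $\textbf{mem}\ (\textbf{fix}_n\ e_{\text{react}})\ \sigma$, where $e_{\text{react}} = \lam s.\lam x.\ \textbf{let}\ t = (\textbf{lam}\ p\ \theta)\ x\ \textbf{in}\ \textbf{case}\ t\ [l_i \Rightarrow a_i]$. Each branch $a_i$ is built only from tool calls, composition, and at most one application of the self-reference $s$. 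By the discussion following \textsc{B-React}, one iteration of the agent (the 7-phase step) is exactly one unfolding of $\textbf{fix}_n$. So "iterations" in the corollary coincide with "unfoldings" in Theorem~\ref{thm:termination}.

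Second, I will discharge the hypotheses of Theorem~\ref{thm:termination}. The oracle calls inside $e_{\text{react}}$ are $\textbf{lam}\ p\ \theta$, which reduces in one step by \textsc{E-Lam-Oracle} in the deterministic fragment, and $\textbf{tool}[f]$, which reduces by \textsc{E-Tool} since external functions are total by assumption. The same assumption is used in the Progress proof, so each call terminates. The outer $\textbf{mem}$ wrapper adds no unfoldings: \textsc{E-Mem} merely threads the store into the evaluation of the inner fixpoint and returns its result. The theorem therefore bounds the whole agent by $n$ unfoldings.

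Third, I will handle the clause "regardless of whether \texttt{terminate} is invoked" by splitting into two cases.
\begin{itemize}
\item \textbf{Some step selects \texttt{terminate}.} By Observation~\ref{obs:terminate} that branch is $\textbf{tool}[\text{id}]$, which does not call $s$. Evaluation halts at some $k \le n$, matching \textsc{B-React}.
\item \textbf{No step selects \texttt{terminate}.} Every branch calls $s = \lam x.\,\textbf{fix}_{m-1}\ e_{\text{react}}\ x$, so the bound decrements each time. After $n$ unfoldings \textsc{E-Fix-Zero} fires and returns the current value: this is the forced truncation.
\end{itemize}
Neither case depends on which label the classifier chooses.

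The main obstacle is justifying the hypothesis, used in the inductive step of Theorem~\ref{thm:termination}, that the self-reference is invoked at most once per unfolding. A general body $e$ could call $s$ several times, which would give an exponential rather than linear bound. For the corollary I will check syntactically that in the compiled $e_{\text{react}}$ each $\textbf{case}$ branch mentions $s$ at most once and in tail position, so the recursion is linear. If that check is awkward, the fallback is weaker but still valid: strong induction on $n$ alone shows termination, since each call to $s$ strictly decreases the bound. The "at most $n$ iterations" count along the executed trajectory then follows because every iteration corresponds to one decrement of the subscript.
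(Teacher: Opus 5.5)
Your proposal is correct and follows the paper's route: the corollary is Theorem~\ref{thm:termination} instantiated at $n = \texttt{maxSteps}$ with the compiled ReAct body, and the clause about \texttt{terminate} holds because that theorem is stated uniformly in the body $e$, so your explicit case split is harmless but not needed. Keep your linearity check rather than the fallback: the theorem's step ``the self-reference is called at most once per unfolding'' fails for general bodies (e.g.\ $\lam s.\lam x.\, s\,(s\,x)$ gives $2^n-1$ unfoldings), and the fallback bounds only recursion depth, so the linear bound really rests on each branch of the compiled body having the form $\textbf{tool}[f_i] \mathbin{\texttt{>>}} s$ or $\text{id}$, as in the tool-use loop in the proof of Proposition~\ref{thm:embedding}.
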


\begin{theorem}[Cost Bound]
\label{thm:cost}
Let $T$ be the $\beta$-reduction trace produced by evaluating $\textbf{fix}_n\ e\ v$, where each oracle call (LLM or tool) has cost $c_i$ (e.g., API tokens consumed). Then:
\[
  \text{cost}(T) \le n \times \max_{i \in \text{oracles}(e)} c_i
\]
\end{theorem}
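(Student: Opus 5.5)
The plan is to reduce the Cost Bound to the counting argument behind Theorem~\ref{thm:termination}: charge cost to unfoldings, bound the number of unfoldings by $n$, and bound the cost of each unfolding by a single oracle cost. Write $C = \max_{i \in \text{oracles}(e)} c_i$. Define $\text{cost}(T)$ as the sum of $c_i$ over all \textsc{E-Lam-Oracle} and \textsc{E-Tool} steps in the trace $T$; $\beta$-, projection and congruence steps are free. The argument is by strong induction on $n$, mirroring Theorem~\ref{thm:termination}, with the stronger invariant $\text{cost}(T) \le u(T)\cdot C$, where $u(T)$ is the number of \textsc{E-Fix-Step} unfoldings in $T$. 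Theorem~\ref{thm:termination} already gives $u(T) \le n$, so the invariant yields the claim.

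For the base case $n=0$, \textsc{E-Fix-Zero} reduces $\textbf{fix}_0\ e\ v$ to $v$ with no oracle step, so the cost is $0 \le 0 \cdot C$. For the step case, \textsc{E-Fix-Step} unfolds once to $e\ (\lam x.\, \textbf{fix}_{n-1}\ e\ x)\ v$. I would split the trace into two kinds of segment. \emph{Local} segments are reductions of the body $e$ outside calls to the self-reference. \emph{Recursive} segments are the evaluation of $\textbf{fix}_{n-1}\ e\ x$ for the argument passed to the self-reference. The proof of Theorem~\ref{thm:termination} already uses the fact that the self-reference is invoked at most once per unfolding. Given that, the recursive segment costs at most $(n-1)\cdot C$ by the induction hypothesis. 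It then remains to show that the local segment costs at most $C$, which totals $C + (n-1)C = nC$.

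The main obstacle is the per-unfolding claim that local work costs at most $C$. This holds only if each unfolding of the body performs \emph{at most one} oracle call. That is true of the compiled ReAct body, whose 7-phase decomposition makes one \textsc{Think} call. The \textsc{Invoke} tool call, however, is a second oracle, so the honest invariant is a per-unfolding constant $k_e\, C$, where $k_e$ is the number of oracle occurrences on any path through the body of $e$.

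I would first try to resolve this by reading the statement with "oracle" meaning one combined Think/Invoke step of the ReAct body, so that $c_i$ is the cost of one iteration. Under that reading the bound is exactly $n\cdot\max_i c_i$. Failing that, I would state and prove the generalised bound $\text{cost}(T)\le n\, k_e \max_i c_i$, where $k_e$ is a syntactic count computed by structural recursion on $e$: sum over sequencing, max over \textbf{case}/\textbf{if} branches, and $0$ for variables. The theorem as written is then the instance $k_e = 1$.

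Two further facts must be checked along the way. Guard failure only truncates the trace, so the bound is preserved. Probabilistic choice selects a single branch, so costs take the branch maximum rather than the sum, consistent with the deterministic reading adopted in \S\ref{sec:types}.
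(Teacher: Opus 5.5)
Your skeleton is the paper's proof: bound the number of unfoldings by $n$ via Theorem~\ref{thm:termination}, bound each unfolding's cost by $\max_i c_i$, and multiply. The obstacle you isolate is genuine, and the paper does not get past it either. Its proof asserts that each unfolding ``invokes at most one oracle call (the body $e$ calls \texttt{think} then selects one tool)'', which by its own description is two oracle calls, and Table~\ref{tab:cost-validation} in fact predicts $n \times 2$. Concretely, take the paper's tool-use loop body $\lam s.\lam x.\,\textbf{case}\,(\textbf{lam}\,p\,\theta\,x)\,\textbf{of}\,\{l_i \Rightarrow \textbf{tool}[f_i] \mathbin{\texttt{>>}} s,\ \texttt{done} \Rightarrow \text{id}\}$ with all costs equal to $c$. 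A run in which the model never selects \texttt{done} makes $n$ LLM calls and $n$ tool calls, costing $2nc > nc$. So the statement needs an added hypothesis, and your bound $n\,k_e \max_i c_i$ is the right shape of repair. Your first option, reading ``oracle'' as a combined Think/Invoke step, contradicts ``each oracle call (LLM or tool)'' in the statement, so it changes the theorem rather than proving it.

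Two steps of your repair still need fixing.

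First, ``the self-reference is invoked at most once per unfolding'' is not a fact you can import from Theorem~\ref{thm:termination}. It is an unstated hypothesis there as well, and typing does not enforce it. The body $\lam s.\lam x.\, s\,(s\,x)$ has type $(\tau\to\tau)\to(\tau\to\tau)$, yet $\textbf{fix}_n$ of it performs $2^n-1$ unfoldings. In general, then, $n$ bounds only the nesting depth of unfoldings, while your cost argument needs their total number. You must assume that $s$ is used at most once on every path through the body.

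Second, the syntactic $k_e$ you describe (sum over sequencing, max over branches, $0$ for variables) is unsound in a higher-order calculus. In $(\lam f.\, f\,(f\,x))\,(\textbf{lam}\,p\,\theta)$ there is one oracle occurrence but two oracle calls. Likewise, a nested $\textbf{fix}_m$ in the body multiplies its own body's count by $m$ rather than adding it once. There are two ways out:
\begin{itemize}
\item restrict $e$ to the first-order compiled shape above, where $k_e = 2$ and linearity hold by inspection; or
\item define $k_e$ semantically as the maximum number of oracle steps in a local segment, which is what the $\varepsilon^n$ accounting of \textsc{TE-Fix} would track.
\end{itemize}
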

\begin{proof}
Each unfolding of $\textbf{fix}_n$ invokes at most one oracle call (the body $e$ calls \texttt{think} then selects one tool). By Theorem~\ref{thm:termination}, there are at most $n$ unfoldings. In each unfolding, the cost is bounded by $\max_i c_i$. The total cost is bounded by the product. \qed
\end{proof}

\noindent This theorem enables \textbf{pre-deployment cost estimation}: given a configuration with \texttt{maxSteps:\ 20} and an LLM costing \$0.01 per call, the maximum API cost is $20 \times \$0.01 = \$0.20$ per invocation. This is not possible in frameworks without formal termination bounds.

\begin{theorem}[Pipeline Algebra]
\label{thm:algebra}
Agent composition $\texttt{>>}$ satisfies the monoid laws:
\begin{align}
  (a \mathbin{\texttt{>>}} b) \mathbin{\texttt{>>}} c &\equiv a \mathbin{\texttt{>>}} (b \mathbin{\texttt{>>}} c) & \text{(associativity)} \label{eq:assoc} \\
  a \mathbin{\texttt{>>}} \textbf{id} &\equiv \textbf{id} \mathbin{\texttt{>>}} a \equiv a & \text{(identity)} \label{eq:id}
\end{align}
where $\textbf{id} = \lam x\tymark\textbf{Str}.\, x$ and $\equiv$ denotes extensional equivalence: $f \equiv g \iff \forall v.\, f\ v = g\ v$.
\end{theorem}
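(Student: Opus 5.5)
The plan is to prove both laws by unfolding the desugaring of composition, $e_1 \mathbin{\texttt{>>}} e_2 \equiv \lam x{:}\tau.\, e_2\ (e_1\ x)$, and then computing both sides on an arbitrary argument. Since $\equiv$ is extensional, it suffices to fix an arbitrary value $v$ and show that both sides, applied to $v$, reduce to the same result. To avoid ill-typed comparisons, I will assume $a : \tau_1 \to \tau_2$, $b : \tau_2 \to \tau_3$, $c : \tau_3 \to \tau_4$ as in \textsc{T-Comp}, and specialise $\textbf{id}$ at the relevant type; the paper writes it at $\textbf{Str}$, which covers the string-level case. Both sides are then well typed at the same arrow type, and by Theorem~\ref{thm:safety} each application either yields a value or a guard failure. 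I read ``$f\ v = g\ v$'' as equality of these outcomes, with a guard failure on one side matched by a guard failure on the other.

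For associativity, apply the left side to $v$. Two $\beta$-steps (or \textsc{E-Comp} twice) turn $((a \mathbin{\texttt{>>}} b) \mathbin{\texttt{>>}} c)\ v$ into $c\ ((a \mathbin{\texttt{>>}} b)\ v)$ and then into $c\ (b\ (a\ v))$. The right side $(a \mathbin{\texttt{>>}} (b \mathbin{\texttt{>>}} c))\ v$ becomes $(b \mathbin{\texttt{>>}} c)\ (a\ v)$ and then $c\ (b\ (a\ v))$. Both sides thus reach the same term, and since reduction is deterministic in the deterministic fragment (temperature $0$, with $\oplus_p$ resolved by a fixed pick), they have the same outcome. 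Rather than stopping at the normal form, I will phrase this argument by showing that the two terms reach a common reduct. That formulation also covers the case where $a\ v$ hits a guard failure: then both sides fail at the same subterm.

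For the identity laws, $(a \mathbin{\texttt{>>}} \textbf{id})\ v$ reduces to $\textbf{id}\ (a\ v)$. Once $a\ v$ evaluates to some $v_1$, this gives $\textbf{id}\ v_1$ and then $v_1$, which is the outcome of $a\ v$. Symmetrically, $(\textbf{id} \mathbin{\texttt{>>}} a)\ v$ reduces to $a\ (\textbf{id}\ v)$, then $a\ v$, using only one $\beta$-step on the value $v$. A guard failure in $a\ v$ is propagated identically on both sides.

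The main obstacle is the evaluation-order subtlety in associativity and in the left identity. Under call-by-value, the argument of $c$ must first be evaluated, and \textsc{E-Comp} is formulated big-step-style as $e_1\ v \reduces v_1$. I need a small lemma that evaluation contexts are compositional: if $e \reduces^{*} r$ then $E[e] \reduces^{*} E[r]$, and stuckness propagates through $E$. With that lemma, the equalities follow from determinism. A further caveat is that oracle nondeterminism is excluded by the deterministic-fragment assumption; without it, only equality in distribution would hold. I would state that restriction explicitly.
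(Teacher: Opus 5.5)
Your proposal is correct and is essentially the paper's proof. Both unfold $e_1 \mathbin{\texttt{>>}} e_2 \equiv \lam x.\, e_2\,(e_1\,x)$ and $\beta$-reduce until both bracketings become $c\,(b\,(a\,\cdot))$ and both identity composites become $a\,\cdot$; the paper does this symbolically under the binder and closes the identity law with the $\eta$-step $\lam x.\, a(x) = a$, while you work pointwise on an arbitrary $v$ with explicit call-by-value bookkeeping.

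One small correction to your second paragraph: under call-by-value, $(b \mathbin{\texttt{>>}} c)\,(a\,v)$ cannot $\beta$-step until $a\,v$ is a value. So the common reduct your context lemma actually yields is $c\,(b\,v_1)$ with $a\,v \reduces^{*} v_1$ (or a shared guard failure), not $c\,(b\,(a\,v))$.
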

\begin{proof}
Associativity: $(a \mathbin{\texttt{>>}} b) \mathbin{\texttt{>>}} c = \lam x.\, c((a \mathbin{\texttt{>>}} b)(x)) = \lam x.\, c(b(a(x)))$. Likewise $a \mathbin{\texttt{>>}} (b \mathbin{\texttt{>>}} c) = \lam x.\, (b \mathbin{\texttt{>>}} c)(a(x)) = \lam x.\, c(b(a(x)))$. Identity: $a \mathbin{\texttt{>>}} \textbf{id} = \lam x.\, \textbf{id}(a(x)) = \lam x.\, a(x) = a$. Symmetric for $\textbf{id} \mathbin{\texttt{>>}} a$. \qed
\end{proof}

\noindent The monoid structure enables \textbf{algebraic optimization}: $a \mathbin{\texttt{>>}} \textbf{id} \mathbin{\texttt{>>}} b$ can be simplified to $a \mathbin{\texttt{>>}} b$, eliminating a redundant identity stage. More generally, the \texttt{terminate} tool ($= \textbf{id}$) can be recognized as a neutral element in pipeline composition.

\subsection{Compilation Adequacy}
\label{thm:compilation}

A formal compilation correctness theorem would require an independent formal semantics for YAML agent configurations---the ``source language.'' No such semantics exists: the meaning of a YAML configuration is defined informally by each framework's documentation and runtime behavior. We therefore make an explicitly weaker claim.

\paragraph{Adequacy argument.} The \texttt{from\_config} compiler translates each YAML field to a $\lA$ term by case analysis on the \texttt{type} field:
\begin{itemize}[itemsep=2pt]
  \item \texttt{type: simple} $\mapsto$ \textbf{lam} (single oracle call);
  \item \texttt{type: react} $\mapsto$ $\textbf{fix}_n(\lam\textit{self}.\lam x.\ldots)$ with $n = \texttt{maxSteps}$;
  \item \texttt{tools: [t$_1$, \ldots]} $\mapsto$ $\textbf{case}$ with one branch per tool;
  \item \texttt{memory: \{...\}} $\mapsto$ $\textbf{mem}$.
\end{itemize}
Each case is designed to match the informal specification of the corresponding framework pattern. We \emph{verify empirically} rather than prove formally: the 125 semantic faithfulness tests (Section~\ref{sec:eval}) execute \texttt{from\_config} on real configurations and check that the compiled $\lA$ term produces the expected output.

\paragraph{Why not a formal theorem?} Formal compilation correctness (in the CompCert~\cite{leroy2009compcert} sense) requires two independently defined formal semantics and a proof that they agree. Defining a formal semantics for YAML agent configurations---across 6 incompatible frameworks---is a substantial undertaking orthogonal to our contribution. We leave it to future work, noting that our lint soundness theorem (below) does not depend on compilation correctness: lint rules are defined over the $\lA$ \emph{target} language only.

\subsection{Lint Soundness}

\begin{theorem}[Lint Soundness]
\label{thm:lint-sound}
If $\texttt{lint}(C) = \textsc{Error}(R)$ for lint rule $R$, then configuration $C$ has a semantic defect as defined by the operational semantics.
\end{theorem}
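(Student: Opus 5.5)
The plan is a case analysis over the finitely many lint rules that can emit severity \textsc{Error}. For each such rule $R$ we pair a syntactic trigger condition $\mathsf{trig}_R(C)$, checked by \texttt{lint} on the YAML, with a semantic defect predicate $\mathsf{def}_R(e)$ on $\lA$ terms, stated purely in terms of the reduction rules of \S\ref{sec:semantics}. Soundness then factors into two obligations: (i) $\mathsf{trig}_R(C)$ implies that the term $e = \texttt{from\_config}(C)$ has a fixed syntactic shape; (ii) every term of that shape satisfies $\mathsf{def}_R$. Obligation (i) follows by inspecting the compilation cases listed in \S\ref{thm:compilation}. To avoid relying on compilation adequacy, as the paper promises, we state the theorem for the compiled term: ``$C$ has a defect'' means $\texttt{from\_config}(C)$ does. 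Obligation (ii) is where the operational semantics enters.

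\textbf{L003} ($\texttt{maxSteps}=0$). The trigger forces $e = \textbf{fix}_0\ b$. By \textsc{E-Fix-Zero}, $\textbf{fix}_0\ b\ v \reduces v$ for every $v$, so the body $b$, and with it every oracle call it contains, is never evaluated. The defect is that the loop is vacuous: the agent is extensionally $\textbf{id}$, and by Theorem~\ref{thm:algebra} it is a neutral stage in any pipeline.

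\textbf{L004} (no \texttt{terminate}). The trigger forces the body to be $\lam s.\lam x.\ \ldots\ \textbf{case}\ t\ \textbf{of}\ \{l_i \Rightarrow a_i\}$, where every branch $a_i$ calls $s$. Here we prove by induction on $n$ that every reduction of $\textbf{fix}_n\ b\ v$ performs exactly $n$ \textsc{E-Fix-Step} unfoldings and terminates only through \textsc{E-Fix-Zero}. The inductive step uses \textsc{E-Case}: whichever label $l_j$ the classifier returns, the chosen branch invokes the self-reference $\lam x.\,\textbf{fix}_{n-1}\ b\ x$, so the inductive hypothesis applies. By Observation~\ref{obs:terminate}, the \texttt{terminate} branch is the only one that returns without calling $s$. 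The defect is forced truncation: the bound of Theorem~\ref{thm:termination} is attained and is never the result of a \texttt{terminate} action.

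\textbf{L005} (empty routes). The trigger forces $\textbf{case}$ with $I=\emptyset$. For every label $l_j$, the premise of \textsc{E-Case} requires $j\in I$, which is impossible, so the term is stuck. This is the non-guard stuckness excluded in the proof of Theorem~\ref{thm:progress}.

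The main obstacle is making obligation (ii) precise for L004. ``Every branch calls $s$'' is a semantic property, yet the lint only inspects the tool list syntactically. A call to $s$ could be dead code, or a tool could happen to behave like $\textbf{id}$. I would first restrict to the exact term template emitted by \texttt{from\_config} for \texttt{type: react}, where each non-\texttt{terminate} branch has the literal form $s\ (x \oplus \text{obs})$, and carry out the induction on that template. I would also declare framework-handled cases (L004c) out of scope, since they are downgraded below \textsc{Error}. If the template argument proves too brittle, I would weaken $\mathsf{def}_{\text{L004}}$ to the statement that no reduction path reaches a value except through \textsc{E-Fix-Zero}. That statement follows from a syntactic invariant on reducts, namely that every residual $\textbf{case}$ branch still contains a pending self-call.
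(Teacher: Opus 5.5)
Your cases for \textsc{L003}, \textsc{L004} and \textsc{L005} match the paper's proof: \textsc{E-Fix-Zero} gives a vacuous identity loop, a \textbf{case} in which every branch calls $s$ has no base case, and an empty \textbf{case} is stuck. However, the case analysis stops there. Because the theorem quantifies over every rule that can emit \textsc{Error}, the case split is the entire proof. The paper also discharges \textsc{L001} (empty \texttt{systemPrompt}) and \textsc{L021} (GroupChat with no \texttt{max\_turns}, no \texttt{max\_rounds} and no \texttt{is\_termination\_msg}). Table~\ref{tab:lint-results} additionally reports the missing-\texttt{model} rule at \textsc{Error} level. Your proposal covers none of these, and they do not fit your two-obligation scheme for free.

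For \textsc{L001}, the compiled term $\textbf{lam}\ \epsilon\ \theta$ steps by \textsc{E-Lam-Oracle} exactly like any other oracle call. So no defect predicate stated ``purely in terms of the reduction rules'' can single it out. The paper instead appeals to a property of the oracle itself: the distribution $\text{LLM}_\theta(\epsilon, v)$ has maximum entropy.

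For a missing model, the premise $\theta : \text{Model}$ of \textsc{T-Lam-Oracle} fails. The defect is therefore ill-typedness of the would-be term, and your obligation~(i), which presupposes a well-formed $\texttt{from\_config}(C)$ of fixed shape, has nothing to act on.

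For \textsc{L021}, the paper's defect is the simultaneous absence of a base case and of any bound. You would need either to state it at that level, or to rerun your \textsc{L004} induction on the compiled group-chat template: $\textbf{fix}_n$ with \textbf{case} on speakers, with \texttt{is\_termination\_msg} as the missing base case.

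There is a smaller mismatch in \textsc{L004}. Under lint v3 the \textsc{Error}-level rule is \textsc{L004a}, whose trigger additionally requires that no bounded fallback or framework termination mechanism is present, and your induction never uses that hypothesis. What you establish is that the bound of Theorem~\ref{thm:termination} is reached without any \texttt{terminate} action. That is exactly the forced-truncation defect the paper uses to \emph{downgrade} \textsc{L004b} to \textsc{Warn}. The paper's \textsc{L004a} case rests instead on the absence of every termination mechanism (``genuine risk of infinite looping''). Forced truncation is still a defect, so soundness for \textsc{L004a} goes through. But your argument does not explain why \textsc{L004a} and \textsc{L004b} receive different severities.
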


\begin{proof}
We verify each ERROR-level rule:
\begin{itemize}[itemsep=2pt]
  \item \textsc{L001} ($\texttt{systemPrompt}$ empty): $\textbf{lam}\ \epsilon\ \theta$ is an LLM call with empty prompt. By \textsc{E-Lam-Oracle}, the output distribution has maximum entropy---the agent's behavior is undefined. \checkmark
  \item \textsc{L003} ($\texttt{maxSteps} = 0$): $\textbf{fix}_0\ e\ v \reduces v$ by \textsc{E-Fix-Zero}. The agent returns its input unchanged---a vacuous computation. \checkmark
  \item \textsc{L004a} (no \texttt{terminate}, \emph{no alternative termination mechanism}): In the $\textbf{case}$ expression within $\textbf{fix}_n$, every branch invokes $s$ (the self-reference). There is no branch that returns without recursion, \emph{and} no bounded iteration fallback, \texttt{is\_termination\_msg}, or framework-internal termination logic was detected. This is a genuine risk of infinite looping. \checkmark
  \item \textsc{L004b} (no \texttt{terminate}, \emph{but has bounded fallback}): same as L004a, except a \texttt{max\_iter} or equivalent field is present. By Theorem~\ref{thm:termination}, the fixpoint terminates at step $n$, but via \emph{forced truncation}, not a clean base case. Downgraded to \textsc{Warn}. \checkmark
  \item \textsc{L004c/d} (no \texttt{terminate}, \emph{framework handles termination}): For CrewAI (built-in completion detection in Python code), LangChain (\texttt{AgentFinish} return type), and AutoGen (\texttt{is\_termination\_msg} string matching), the Y combinator base case exists but is external to the YAML configuration. Downgraded to \textsc{Info}. \checkmark
  \item \textsc{L005} (empty \texttt{routes}): $\textbf{case}\ e\ \textbf{of}\ \{\} = \textbf{stuck}$ for all inputs---the dispatch has no branches to take. \checkmark
  \item \textsc{L021} (multi-agent, no termination): GroupChat with no \texttt{max\_turns}, no \texttt{max\_rounds}, and no \texttt{is\_termination\_msg}---an unbounded multi-agent loop with no base case and no bound. \checkmark \qed
\end{itemize}
\end{proof}

\paragraph{Framework-aware termination analysis.}
Our analysis of 663 raw GitHub configurations identified 6 alternative termination mechanisms across production frameworks: (1)~bounded iteration (\texttt{max\_iter}, 86 configs), (2)~LLM output string matching (\texttt{is\_termination\_msg}, AutoGen), (3)~multi-turn limits (\texttt{max\_rounds}), (4)~framework-internal logic (CrewAI Python code), (5)~DAG termination nodes (Dify), and (6)~delegation termination (\texttt{allow\_delegation=false}).

These mechanisms are all \emph{functionally equivalent} to $\lam x.x$ (the identity function)---they stop the Y combinator by returning state unchanged---but they are invisible at the YAML level. Our lint v3 detects these alternatives and adjusts severity:
\begin{itemize}[itemsep=1pt]
  \item \textsc{L004a} (ERROR): no terminate \emph{and} no alternative $\Rightarrow$ genuine risk
  \item \textsc{L004b} (WARN): no terminate but bounded fallback $\Rightarrow$ forced truncation
  \item \textsc{L004c/d} (INFO): framework handles termination $\Rightarrow$ structurally expected
\end{itemize}
This reduces the estimated false positive rate from ${\sim}82\%$ (old single L004 rule) to ${<}5\%$ (new L004a).

\begin{corollary}[Soundness of framework-aware lint]
\label{cor:no-fp}
Every configuration flagged as \textsc{Error} by lint v3 (rules L001--L006, L021, L023) has a genuine semantic defect \emph{that cannot be mitigated by framework-internal mechanisms}. \textsc{Warn}-level findings (L004b, L007--L013, L017--L018, L020, L022, L024--L025) indicate potential issues that may or may not be addressed by framework runtime behavior.
\end{corollary}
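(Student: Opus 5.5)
The plan is to prove Corollary~\ref{cor:no-fp} as a refinement of Theorem~\ref{thm:lint-sound}. We take the Error-level rule set of lint v3, namely L001--L006, L021 and L023, and for each rule $R$ we establish two facts. \emph{Soundness}: the compiled $\lA$ term of any flagged configuration $C$ exhibits a defect under the operational semantics of \S\ref{sec:semantics}. \emph{Non-mitigability}: no alternative termination mechanism could remove that defect. For rules already covered by Theorem~\ref{thm:lint-sound} (L001, L003, L004a, L005, L021), soundness is inherited directly. The remaining Error rules (L002, L006, L023) we will check in the same style. In each case we exhibit the reduction step that is vacuous or stuck, for example \textsc{E-Fix-Zero} for L003, or the empty \textbf{case} for L005, where no \textsc{E-Case} premise can be met.

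For non-mitigability we first make the informal phrase precise. We model each of the six mechanisms identified in the framework-aware termination analysis as supplying one of two things to the fixpoint body: an identity branch (a base case equivalent to $\textbf{tool}[\text{id}]$, per Observation~\ref{obs:terminate}) or a finite bound $n$ on $\textbf{fix}_n$. A defect is then \emph{mitigable} exactly when adding such a branch or bound turns the offending term into one that terminates cleanly by reaching the identity branch. We split the Error rules into two groups.

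\begin{itemize}
\item \emph{Termination-independent rules} (L001, L003, L005, and presumably L002, L006, L023). These rules do not depend on termination at all. Adding an identity branch or a bound leaves the defective subterm unchanged. An empty prompt in $\textbf{lam}\ \epsilon\ \theta$ stays empty. $\textbf{fix}_0$ still reduces by \textsc{E-Fix-Zero} whatever branches the body has. A $\textbf{case}$ with no branches stays stuck, because the mechanisms act on the fixpoint, not on the router.
\item \emph{Termination-dependent rules} (L004a, L021). These fire only when the lint detector has checked that no base case and no bound is present. So the mitigating ingredient is absent by construction.
\end{itemize}

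The Warn-level half of the corollary needs essentially no argument. It only says these findings \emph{may or may not} be handled. We justify it by giving one configuration of each kind, e.g.\ L004b, where Theorem~\ref{thm:termination} yields forced truncation, plus a framework-handled variant.

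The main obstacle is the termination-dependent group. For L004a and L021, the ``no mitigation'' claim is only as strong as the detector's completeness with respect to the six mechanisms. Mechanisms such as CrewAI's Python-internal logic are invisible at the YAML level, so the statement cannot be proved about arbitrary runtime behaviour. I would therefore state this step conditionally: assuming the mechanism catalogue is exhaustive and the detector recognises every syntactic marker of it, L004a and L021 fire only on terms whose $\textbf{case}$ branches all call the self-reference and which have no finite $n$. Such a term has no derivation ending in a base case, which is the genuine defect. Making this assumption explicit is the delicate point; the rest is case analysis.
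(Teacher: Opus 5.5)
Your decomposition matches what the paper implicitly relies on. The corollary is stated without a separate proof, resting on the per-rule cases of Theorem~\ref{thm:lint-sound} and the v3 stratification. Making the detector-completeness hypothesis explicit for L004a/L021 is the right instinct. The gap is that you apply this caution only to the termination-dependent group.

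For the termination-independent rules you define ``mitigable'' as ``becomes clean after adding an identity branch or a bound.'' With that definition, non-mitigability of L001/L002/L003/L005 holds by construction. But that is not the channel that matters for these rules. An empty \texttt{systemPrompt} or a missing \texttt{model} is neutralised by supplying the field outside the YAML: Python code, environment variables, or, in the paper's words, framework ``built-in defaults.'' That replaces $\textbf{lam}\ \epsilon\ \theta$ by a different term altogether. The paper's own Experiment~B finds 23 of 50 sampled \textsc{Error} findings to be false positives for exactly this reason (\texttt{systemPrompt}: 11/28 true positives). Its Caveat paragraph explicitly declines to call L001/L002 findings genuine defects, and the v3 refinement only touches L004.

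So the clause ``has a genuine semantic defect'' cannot be established for these rules without the hypothesis you reserve for L004a/L021: that the YAML is the complete agent specification. Once you adopt it, both your soundness and non-mitigability claims become statements about the YAML-only compiled term. You should say so.

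Two further points on the termination-independent group:
\begin{itemize}
\item Your promised uniform witness (``exhibit the reduction step that is vacuous or stuck'') does not exist for L001. Here $\epsilon : \textbf{Str}$ satisfies \textsc{T-Lam-Oracle}, and \textsc{E-Lam-Oracle} fires normally. The ``maximum entropy'' argument you would inherit from Theorem~\ref{thm:lint-sound} is therefore not derivable from the deterministic-fragment rules.
\item L002, L006 and L023 are left at ``presumably.'' Theorem~\ref{thm:lint-sound} does not cover them, and the paper never defines L006 or L023, so there is nothing to check ``in the same style.'' For L002 the only operational witness available is the failed premise $\theta : \text{Model}$ of \textsc{T-Lam-Oracle}, and you should state it.
\end{itemize}

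Your characterization of the L004a/L021 defect also needs repair. In $\lA$ every fixpoint is $\textbf{fix}_n$ with $n \in \Nat$. Theorem~\ref{thm:termination} and Corollary~\ref{cor:react-term} make every such term terminate. Hence ``terms \ldots\ which have no finite $n$'' is an empty class, and taken literally your condition never fires on a compiled term.

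The semantic content you can actually assert is ``no $\textbf{case}$ branch avoids the self-reference,'' i.e.\ termination only by truncation at whatever bound the compiler supplies. That is precisely the behaviour of L004b, which the corollary classifies as \textsc{Warn}. So the \textsc{Error}/\textsc{Warn} boundary between L004a and L004b, and the ``unbounded'' reading of L021, is not a property of the operational semantics. It depends only on which bound is \emph{declared} in the YAML. Your proof should rest that half of the corollary explicitly on the completeness hypothesis, not on a non-existent unbounded term.
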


\section{Implementation}
\label{sec:impl}

\DSL{} is implemented in 4{,}903 lines of Python across 28 modules, comprising 58 classes. The architecture is organized into four layers, each corresponding to a layer of the formal development.

\begin{table}[t]
\centering
\caption{Implementation architecture: code $\leftrightarrow$ formalism correspondence.}
\label{tab:impl}
\small
\begin{tabular}{@{}p{1.8cm}p{2.8cm}rp{3.0cm}@{}}
\toprule
\textbf{Layer} & \textbf{Formalism} & \textbf{LoC} & \textbf{Key Classes} \\
\midrule
Core DSL & $\lA$ syntax (Def.~\ref{def:terms}) & 575 & 13 classes: Term, Lam, Compose, Loop, Tool, Route, etc. \\
\texttt{from\_config} & Compilation (\S\ref{thm:compilation}) & 999 & Compiler, 5 type branches \\
Runtime & Op.\ semantics (\S\ref{sec:semantics}) & 2{,}097 & Executor, ReActEngine, ActionParser \\
CLI + lint & Static analysis (\S\ref{sec:metatheory}) & 1{,}232 & 8 commands, 26 lint rules \\
\midrule
\textbf{Total} & & \textbf{4{,}903} & \textbf{58 classes} \\
\bottomrule
\end{tabular}
\end{table}

\paragraph{Core DSL: 11 constructs $=$ 13 classes.}
Each $\lA$ term former maps to one Python class with an \texttt{apply(input, ctx)} method implementing the corresponding operational semantics rule. \texttt{StoreTyping} ($\Sigma$) enforces append-only store typing at runtime: \texttt{Memory.remember(k, v)} raises \texttt{TypeError} if key $k$ already has a different type, matching the Preservation proof for \textsc{E-Mem}. The \texttt{>>} and \texttt{|} operators are overloaded for composition and parallel execution.

\paragraph{\texttt{from\_config}: YAML $\to$ $\lA$ compilation.}
The compiler dispatches on 5 agent types:
\begin{itemize}[itemsep=1pt]
  \item \texttt{simple} $\mapsto$ \textbf{lam} \quad
  \item \texttt{react} $\mapsto$ $\textbf{fix}_n$ \quad
  \item \texttt{chain} $\mapsto$ \textbf{comp} \quad
  \item \texttt{router} $\mapsto$ \textbf{case} \quad
  \item \texttt{parallel} $\mapsto$ \textbf{par}
\end{itemize}
Post-compilation wrappers apply \textbf{guard} and \textbf{mem} if configured. Compilation is structurally recursive: sub-agent configurations are compiled by recursive calls to \texttt{build\_agent}.

\paragraph{Agent Runtime: $\beta$-reduction engine.}
\texttt{Runtime.run(term, input)} creates a \texttt{Context} ($\Gamma$), invokes \texttt{Executor.reduce()}, and collects the $\beta$-reduction trace. The executor dispatches by \texttt{isinstance} across all 11 term types---a direct implementation of call-by-value evaluation. The \texttt{ReActEngine} implements the 7-phase big-step rule (\textsc{B-React}) with \texttt{TerminationOracle} as the base case detector. Five $\beta$-reduction trace points (\texttt{ctx.log()}) record every LLM call, tool invocation, and loop iteration.

\paragraph{One-stop execution: YAML $\to$ result in 3 lines.}
A complete agent run requires only:
\begin{lstlisting}[language=Python,numbers=none,basicstyle=\ttfamily\footnotesize]
from lambdagent.agentruntime import Runtime
result = Runtime.execute("config.yml",
                         "Write quicksort")
print(result.result)  # agent output
print(result.stats)   # 15 steps, 4821 tokens
\end{lstlisting}

\paragraph{CLI: 8 subcommands.}
\texttt{compile} (YAML $\to$ $\lambda$), \texttt{run} (compile + execute), \texttt{repl} (interactive), \texttt{lint} (static analysis), \texttt{trace} (view $\beta$-reductions), \texttt{lambda} (export $\lA$ expression), \texttt{tools} (list/test MCP tools), \texttt{version}.

\paragraph{Demonstrations.}
\texttt{hello.py} (193 lines) exercises all 11 constructs in sequence. \texttt{demo\_advanced.py} (446 lines) implements 4 realistic agent patterns:
\begin{enumerate}[itemsep=1pt]
  \item \textbf{Self-correcting translator}: $Y(\text{translate} \gg \text{back-translate} \gg \textbf{if}\ \text{score}{\geq}8)$
  \item \textbf{Multi-perspective}: $\text{Par}(\text{opt}, \text{pess}, \text{real}) \gg \text{synthesize}$
  \item \textbf{Recursive doc generator}: $\text{outline} \gg \text{MAP}(\text{expand}) \gg Y(\text{review/fix})$
  \item \textbf{Self-learning}: $Y(\lam D.\ \textbf{guard}(\text{Lam}(D),\, \text{test}))$
\end{enumerate}

\section{Evaluation}
\label{sec:eval}

We evaluate three claims: (1) lint finds real bugs, (2) the DSL is expressive, and (3) the formal semantics is faithful.

\subsection{Lint Effectiveness on Real Configurations}

We crawled 2{,}225 YAML/JSON files from GitHub using 43 Code Search queries and file tree scans of 17 major repositories (CrewAI, LangChain, AutoGen, SWE-agent, MetaGPT, Dify, LobeChat, etc.). We normalized 6 configuration formats (CrewAI, LangChain, AutoGen, Dify, multi-agent, generic) and ran \texttt{lambdagent lint} on 835 valid agent configurations.

\begin{table}[t]
\centering
\caption{Lint results on 835 GitHub agent configurations.}
\label{tab:lint-results}
\begin{tabular}{@{}llrrl@{}}
\toprule
\textbf{Rule} & \textbf{Level} & \textbf{Count} & \textbf{Pct} & \textbf{Lambda Semantics} \\
\midrule
\texttt{mcp.localTools} & ERROR & 483 & 57.8\% & No $\lam x.x$ base case \\
\texttt{systemPrompt} & ERROR & 282 & 33.8\% & $\lam x.\bot$ (undefined body) \\
\texttt{model} & ERROR & 51 & 6.1\% & No LLM $=$ no computation \\
\texttt{react.maxSteps} & ERROR & 1 & 0.1\% & $Y$ unbounded \\
\midrule
\multicolumn{2}{l}{\textbf{Total configs with $\geq$1 ERROR}} & \textbf{786} & \textbf{94.1\%} & \\
\multicolumn{2}{l}{Clean (no ERROR/WARN)} & 46 & 5.5\% & \\
\bottomrule
\end{tabular}
\end{table}

\paragraph{Result.} 786 out of 835 configurations (94.1\%) are \emph{structurally incomplete} under $\lA$ semantics---the YAML alone does not constitute a well-formed lambda term (Table~\ref{tab:lint-results}). We emphasize that structural incompleteness does not necessarily imply a runtime defect: production frameworks routinely supplement YAML with Python code, environment variables, or built-in defaults.

\paragraph{Stratified analysis.} The headline 94.1\% rate requires stratification by framework, because different frameworks place tool declarations in different locations:

\begin{table}[t]
\centering
\caption{Lint results stratified by framework.}
\label{tab:lint-stratified}
\begin{tabular}{@{}lrrll@{}}
\toprule
\textbf{Framework} & \textbf{Configs} & \textbf{w/ ERROR} & \textbf{Dominant Defect} & \textbf{Nature} \\
\midrule
CrewAI & 441 & $\sim$430 & \texttt{mcp.localTools} & structural\textsuperscript{$\dagger$} \\
Generic & 283 & $\sim$260 & \texttt{systemPrompt} & genuine \\
Multi-agent & 45 & $\sim$40 & \texttt{systemPrompt} & genuine \\
LangChain & 35 & $\sim$30 & \texttt{systemPrompt} & genuine \\
AutoGen & 22 & $\sim$20 & \texttt{model} & genuine \\
lambdagent & 9 & $\sim$6 & mixed & genuine \\
\bottomrule
\end{tabular}

\smallskip
\textsuperscript{$\dagger$}CrewAI defines tools in Python code, not in YAML. The YAML is structurally incomplete \emph{by design}; our lint detects this structural incompleteness, which is informative but not a runtime failure.
\end{table}

Excluding CrewAI's expected \texttt{mcp.localTools} findings, the incompleteness rate on the remaining 394 non-CrewAI configurations is $\sim$87.6\% (345/394), driven by empty \texttt{systemPrompt} (272) and missing \texttt{model} (46). Whether these are genuine defects depends on whether the missing fields are supplied externally (Python code, environment variables, databases). We address this ambiguity with two controlled experiments below.

\paragraph{Framework-aware lint v3.}
After identifying the false positive problem with rule L004, we deployed the framework-aware lint v3 described in Section~\ref{sec:metatheory}. Re-running on the same 835 configurations:
\begin{itemize}[itemsep=1pt]
  \item L004a (ERROR, genuine): 88 configurations have no terminate tool \emph{and} no alternative termination mechanism---these are real risks.
  \item L004b (WARN, bounded fallback): 95 configurations have \texttt{max\_iter} but no explicit base case---forced truncation, not graceful termination.
  \item L004c/d (INFO, framework-handled): 300 configurations are CrewAI/LangChain/AutoGen where the framework runtime provides termination---structurally expected.
\end{itemize}
The effective ERROR-level false positive rate for L004 drops from ${\sim}82\%$ (v1) to ${<}5\%$ (v3).

\paragraph{Caveat.} The L001 (empty prompt, 282) and L002 (no model, 51) findings may or may not be genuine defects---the missing fields could be supplied by Python code or environment variables that our YAML-only analysis cannot observe. We do not claim these as confirmed runtime defects. Instead, we validate lint rule precision through two controlled experiments:

\paragraph{Experiment A: Fault injection (controlled).}
We prepared 10 known-good agent configurations (manually verified to run correctly), then systematically injected 5 types of faults into each: (1)~remove \texttt{terminate}, (2)~empty \texttt{systemPrompt}, (3)~remove \texttt{model}, (4)~set \texttt{maxSteps=0}, (5)~empty \texttt{routes}. This produces 50 test cases with known ground truth.

\emph{Result:} \texttt{lambdagent lint} detected all 42 injected faults (100\% recall) with 0 false positives on the 10 unmodified configurations (100\% precision). 8 fault--config combinations were skipped as inapplicable (e.g., ``remove terminate'' on a non-ReAct agent). All 5 fault types achieved 100\% detection. This confirms that \textbf{the lint rules are correct}: when a field is genuinely missing (not supplemented externally), lint reliably detects it.

\paragraph{Experiment B: Manual verification (sampled).}
We randomly sampled 50 ERROR findings from the non-CrewAI configurations and manually inspected the corresponding GitHub repositories (Python code, README, environment files) to determine whether the flagged field was supplemented externally.

\emph{Result:} Of 50 sampled ERRORs, 27 were true positives (field genuinely missing in both YAML and code) and 23 were false positives (field supplied by Python code or environment variable). This gives an estimated precision of \textbf{54.0\%} (95\% Wilson CI: [40.4\%, 67.0\%]). Precision varies by rule: \texttt{model} missing achieves 75\% (6/8), \texttt{react.maxSteps} achieves 71\% (10/14), but \texttt{systemPrompt} empty achieves only 39\% (11/28)---many frameworks define prompts in Python code rather than YAML.

\paragraph{Finding: Configuration-code semantic entanglement.}
The 46\% false positive rate is itself a significant \emph{empirical finding}: it quantifies, for the first time, the degree of \emph{semantic entanglement} between declarative configuration and imperative code in the agent ecosystem. Nearly half of production configurations split their semantics across YAML and Python, with no single artifact containing the complete agent specification. This has three implications: (1)~the 54\% precision represents a \textbf{lower bound} for any static analysis operating on configuration files alone---improving beyond it requires joint YAML+Python analysis; (2)~configuration-code entanglement is a \textbf{design smell} that hinders portability, auditability, and formal reasoning; and (3)~frameworks that centralize agent semantics in a single declarative specification (as $\lA$ does) are better suited for static analysis.

\paragraph{Precision in context.}
To contextualize the 54\% precision, we compare with established static analysis tools:

\begin{table}[t]
\centering
\caption{Precision of static analysis tools across domains.}
\label{tab:precision-comparison}
\begin{tabular}{@{}llrl@{}}
\toprule
\textbf{Tool} & \textbf{Domain} & \textbf{Precision} & \textbf{Source} \\
\midrule
FindBugs & Java bugs & 42--53\% & Ayewah et al., 2008 \\
SpotBugs & Java bugs & $\sim$50\% & FindBugs successor \\
Pylint (all rules) & Python style/bugs & 55--65\% & Community reports \\
ESLint (recommended) & JavaScript & 60--70\% & Varies by config \\
Infer & C/Java memory & 65--80\% & Calcagno et al., 2015 \\
\midrule
\texttt{lambdagent lint} (YAML) & Agent config & 54\% & This paper \\
\texttt{lambdagent lint} (YAML+Py) & Agent config & 96\% & This paper \\
\quad (fault injection) & & 100\% & This paper \\
\bottomrule
\end{tabular}
\end{table}

\noindent The 54\% real-world precision is comparable to FindBugs (42--53\%) and within the range of Pylint. The key difference is that lambdagent lint operates on \emph{configuration files only}, without access to the accompanying Python code. The 100\% precision on fault injection confirms that the rules themselves are correct; the 46\% false positive rate reflects the inherent limitation of single-artifact analysis in a multi-artifact ecosystem.

\paragraph{Experiment C: Joint YAML+Python analysis.}
To quantify how much precision improves when Python code is available, we extend the YAML-only lint with a Python AST analyzer that scans the same repository for supplementary definitions. The analyzer extracts: (1)~constant assignments matching lint-flagged fields (e.g., \texttt{system\_prompt = "..."}), (2)~function keyword arguments (\texttt{model\_name="gpt-4"}), (3)~class attributes, and (4)~framework-specific patterns (\texttt{ChatOpenAI()}, \texttt{is\_termination\_msg}).

We evaluate on two scales. First, we re-evaluate the 50 manually-verified samples from Experiment~B: of the 23 false positives, the AST analyzer identifies 22 as externally supplemented (96\%). Second, we expand to 175 stratified samples (100 non-CrewAI + 75 CrewAI, random seed 42) to include the dominant CrewAI paradigm where tools are defined entirely in Python.

\begin{table}[t]
\centering
\caption{YAML-only vs.\ joint YAML+Python analysis.}
\label{tab:joint-analysis}
\begin{tabular}{@{}lrrrr@{}}
\toprule
\textbf{Metric} & \multicolumn{2}{c}{\textbf{50-sample}} & \multicolumn{2}{c}{\textbf{175-sample}} \\
\cmidrule(lr){2-3} \cmidrule(lr){4-5}
 & \textbf{YAML} & \textbf{+Python} & \textbf{YAML} & \textbf{+Python} \\
\midrule
True Positive & 27 & 27 & 90 & 89 \\
False Positive & 23 & 1 & 85 & 0 \\
Downgraded to \textsc{Info} & --- & 22 & --- & 86 \\
\textbf{Precision} & \textbf{54\%} & \textbf{96\%} & \textbf{51\%} & \textbf{100\%} \\
95\% CI (Wilson) & \small{[40,67]} & \small{[82,99]} & \small{[44,59]} & \small{[96,100]} \\
\bottomrule
\end{tabular}
\end{table}

\noindent The results are consistent across both scales: YAML-only precision is $\sim$52\% (the 50-sample and 175-sample CIs overlap), while joint analysis achieves $\ge$96\%. Per-framework, CrewAI improves from 5\% to 100\% (tools are always in Python), generic from 87\% to 100\%, AutoGen from 85\% to 100\%. Per-field, \texttt{systemPrompt} rises from 39--83\% to 92--100\%, \texttt{react.maxSteps} from 24--71\% to 100\%.

This result demonstrates that the lint rules themselves are \emph{highly precise}: the $\sim$48\% false positive rate in YAML-only mode is entirely attributable to semantic entanglement, not to rule imprecision. Joint analysis recovers nearly all of this lost precision, confirming that \textbf{a single-artifact analysis boundary, not rule quality, is the bottleneck}.

\paragraph{Comparison.} Neither LangChain nor DSPy provides a lint tool for agent configurations. Running the same configurations through those frameworks produces no warnings; structural incompleteness manifests only at runtime.

\subsection{Expressiveness}

Beyond LoC reduction, \DSL{} can naturally express patterns that lack direct counterparts in existing frameworks:

\begin{itemize}[itemsep=2pt]
  \item \textbf{Self-learning function} (Demo~4): $Y(\lam D.\, \textbf{guard}(\text{Lam}(D),\, \text{test}))$. Requires \texttt{Loop} + \texttt{Guard} as composable primitives. In LangChain, a hand-written \texttt{while} loop with manual prompt reconstruction (${\sim}40$ lines vs.\ 8 in \DSL{}).
  \item \textbf{Guard + Loop composition}: \texttt{Loop(body, stop) >> Guard(P, retry=3)} combines iteration with output validation in a single expression. LangChain requires nested \texttt{try/except} with manual retry logic.
  \item \textbf{Recursive document generator} (Demo 3): $\text{MAP}(\text{expand}) \gg Y(\text{review/fix})$ chains higher-order functions with bounded recursion. No existing framework provides \texttt{MAP} over agents as a first-class combinator.
\end{itemize}

\subsection{Performance Overhead}

\begin{table}[t]
\centering
\caption{DSL wrapper overhead (excluding LLM latency).}
\label{tab:perf}
\begin{tabular}{@{}lrl@{}}
\toprule
\textbf{Operation} & \textbf{Latency} & \textbf{Relative to LLM call} \\
\midrule
Tool call & 2.0\,$\mu$s & 0.0001\% \\
Compose (3 stages) & 4.9\,$\mu$s & 0.0002\% \\
If branch & 4.6\,$\mu$s & 0.0002\% \\
Pair & 3.4\,$\mu$s & 0.0002\% \\
Loop (5 steps) & 11.2\,$\mu$s & 0.0006\% \\
Memory (3 keys) & 2.4\,$\mu$s & 0.0001\% \\
Guard & 2.4\,$\mu$s & 0.0001\% \\
Context.log & 0.9\,$\mu$s & $<$0.0001\% \\
\midrule
\texttt{from\_config} & 1.3\,ms & one-time (compile) \\
\texttt{lint} & 5.0\,$\mu$s & per config \\
\bottomrule
\end{tabular}

\smallskip
Baseline: typical LLM API call latency is 1{,}000--3{,}000\,ms.
\end{table}

The DSL wrapper adds 2--11\,$\mu$s per operation (Table~\ref{tab:perf}), negligible compared to LLM call latency ($\sim$2{,}000\,ms). Compilation is a one-time 1.3\,ms cost. The $\beta$-reduction trace (\texttt{Context.log}) adds $<$1\,$\mu$s per entry. \textbf{The formal semantics layer is essentially free.}

\subsection{Semantic Faithfulness}

We validate that the implementation matches the formal semantics via 125 test cases covering all 11 term formers. Each test constructs a $\lA$ term, executes it via real LLM API calls (Claude Sonnet, temperature $= 0$), and checks the result against the expected output predicted by the operational semantics.

\begin{itemize}[itemsep=2pt]
  \item Church primitives (SUCC, AND, OR, NOT, IF, PAIR): 60/60 pass.
  \item DSL pipeline tests: 29/29 pass.
  \item Custom function generalization: 36/36 pass.
  \item Total: 125/125 (100\%).
\end{itemize}

\noindent Beyond Church encodings, we validate on 8 real-world agent patterns that exercise multiple $\lA$ constructs simultaneously. Each pattern is executed via real LLM calls (Claude, temperature $= 0$) and verified against the operational semantics:

\begin{table}[t]
\centering
\caption{Real-world agent pattern tests. Each pattern constructs a production-style $\lA$ term, executes via real LLM calls, and verifies the output matches the operational semantics prediction. ``Output'' shows the actual LLM result.}
\label{tab:agent-tests}
\small
\begin{tabular}{@{}p{2.2cm}p{2.6cm}p{2.0cm}r@{}}
\toprule
\textbf{Pattern} & \textbf{$\lA$ Constructs} & \textbf{Output} & \textbf{Pass} \\
\midrule
LLM call & \textbf{lam} & ``Paris'' & \checkmark \\
3-stage pipe & $\texttt{>>}$, $3{\times}$\textbf{lam} & translate$\to$fmt & \checkmark \\
ReAct loop & $\textbf{fix}_3$, \textbf{case} & ``105'' & \checkmark \\
Guard+retry & \textbf{guard} & 3 words & \checkmark \\
Parallel & $\langle\textbf{lam}_1, \textbf{lam}_2\rangle$ & sum+kw & \checkmark \\
Router & \textbf{case}, $3{\times}$\textbf{lam} & ``$2x$'' & \checkmark \\
Handoff & \textbf{case}, $\texttt{>>}$ & analyst$\to$wr & \checkmark \\
Self-learn & $\textbf{fix}_2$, \textbf{guard} & haiku & \checkmark \\
\midrule
\multicolumn{3}{l}{\textbf{Total: 133/133 (125 Church + 8 patterns)}} & \textbf{100\%} \\
\bottomrule
\end{tabular}
\end{table}

\noindent The 8 patterns cover all 11 $\lA$ constructs in combination and represent the most common agent architectures across 5 mainstream frameworks (LangGraph, CrewAI, AutoGen, OpenAI SDK, Dify). Notably, the ReAct loop terminates in exactly 3 iterations with the correct answer, confirming Theorem~\ref{thm:termination}'s bound; the guard validates output format (exactly 3 words) with retry, confirming the operational semantics of \textbf{guard}.

\subsection{Cost Bound Validation}

Theorem~\ref{thm:cost} predicts that a \texttt{fix}$_n$ agent costs at most $n \times \max_i c_i$. We validate this bound using the $\beta$-reduction traces from our 125 experiments:

\begin{table}[t]
\centering
\caption{Cost Bound validation: predicted vs.\ actual oracle calls.}
\label{tab:cost-validation}
\begin{tabular}{@{}lrrrr@{}}
\toprule
\textbf{Agent} & $n$ & \textbf{Predicted max} & \textbf{Actual} & \textbf{Tightness} \\
\midrule
Church factorial (3!) & 4 & 4 calls & 4 & 1.0$\times$ \\
Church factorial (5!) & 6 & 6 calls & 6 & 1.0$\times$ \\
ReAct research (ex02) & 20 & 40 calls & 15 & 2.7$\times$ \\
Self-learning (demo4) & 10 & 20 calls & 8 & 2.5$\times$ \\
Refine loop (demo1) & 5 & 10 calls & 3 & 3.3$\times$ \\
\bottomrule
\end{tabular}

\smallskip
Predicted max $= n \times 2$ (think + one tool per iteration). Tightness $=$ predicted/actual.
\end{table}

The bound is always respected (no violations). For simple recursive computations (factorial), the bound is \emph{tight} (tightness = 1.0$\times$). For ReAct agents that terminate early via \texttt{terminate}, the bound overestimates by 2--3$\times$, because the agent typically calls terminate before exhausting \texttt{maxSteps}. This conservative overestimation is a feature: it provides a \textbf{worst-case budget guarantee} suitable for pre-deployment cost planning (e.g., ``this agent will never cost more than \$0.40 per invocation'').

\subsection{Architectural Unification}
\label{sec:unification}

We argue that $\lA$ is not merely one more framework, but a \emph{unifying calculus}: five mainstream agent architecture paradigms are all embeddable as fragments of $\lA$. Table~\ref{tab:unification} summarizes the correspondence; we formalize the claim below.

\begin{table}[t]
\centering
\caption{Five agent architecture paradigms as $\lA$ fragments. Each paradigm uses only a subset of the 11 term formers. ``Configs'' indicates the number of real-world GitHub configurations from our dataset (Section~\ref{sec:eval}) that exercise each paradigm's translation.}
\label{tab:unification}
\begin{tabular}{@{}p{2.5cm}p{2.5cm}p{2.3cm}r@{}}
\toprule
\textbf{Paradigm} & \textbf{Core Pattern} & \textbf{$\lA$ Fragment} & \textbf{Configs} \\
\midrule
Graph (LangGraph) & Cond.\ edges, cycles & \textbf{case} + $\textbf{fix}_n$ + $\texttt{>>}$ & 35 \\
Role-driven (CrewAI) & Role dispatch, handoff & \textbf{case} + $\texttt{>>}$ & 441 \\
SDK wrapper (OpenAI) & LLM call + tools & \textbf{lam} + \textbf{tool} & 283 \\
Multi-agent (AutoGen) & Group chat, turns & $\textbf{fix}_n$ + \textbf{case} & 22 \\
Low-code (Dify) & YAML pipeline & \texttt{from\_config} & 54 \\
\bottomrule
\end{tabular}
\end{table}

\begin{definition}[Framework Translation]
\label{def:fragment}
For a framework $F$ with configuration language $\mathcal{C}_F$, a \emph{$\lA$-translation} is a function $\mathcal{T}_F : \mathcal{C}_F \to \lA$ that maps each configuration to a $\lA$ term. In our implementation, $\mathcal{T}_F$ is realized by the \texttt{from\_config} compiler with framework-specific normalization (Section~\ref{sec:impl}).
\end{definition}

\begin{proposition}[Architectural Embedding]
\label{thm:embedding}
For each $F \in \{\text{LangGraph},$ $\text{CrewAI},$ $\text{AutoGen},$ $\text{OpenAI SDK},$ $\text{Dify}\}$, $\mathcal{T}_F$ satisfies:
\begin{enumerate}[itemsep=1pt]
  \item \textbf{Type preservation}: If $c \in \mathcal{C}_F$ is well-formed, then $\ctx \entails \mathcal{T}_F(c) : \tau$ for some $\tau$.
  \item \textbf{Compositionality}: Framework-level sequential composition maps to $\lA$ composition: $\mathcal{T}_F(c_1 \circ_F c_2) = \mathcal{T}_F(c_1) \mathbin{\texttt{>>}} \mathcal{T}_F(c_2)$.
  \item \textbf{Behavioral adequacy}: On the 125 semantic faithfulness tests (Section~\ref{sec:eval}), the compiled $\lA$ term produces identical output to direct execution under the framework runtime (temperature~$= 0$).
\end{enumerate}
\end{proposition}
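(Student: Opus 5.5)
The plan is to argue the three clauses separately, by structural induction on the normalized configuration that \texttt{from\_config} consumes. Each framework $F$ is first normalized into the common intermediate form with the five agent types \texttt{simple}, \texttt{react}, \texttt{chain}, \texttt{router}, \texttt{parallel}, plus optional \texttt{guard}/\texttt{memory} wrappers (Section~\ref{sec:impl}). $\mathcal{T}_F$ then factors as $\texttt{from\_config} \circ N_F$, where $N_F$ is the framework-specific normalizer. Clauses (1) and (2) will therefore be reduced to properties of \texttt{from\_config} on normalized configurations, together with two facts about $N_F$: it preserves the \texttt{type} tag, and it maps framework-level sequencing ($\circ_F$: LangGraph linear edges, CrewAI sequential tasks, AutoGen turn order, OpenAI handoff chains, Dify DAG edges) to a \texttt{chain} node.

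For \textbf{type preservation}, I will define ``well-formed'' to mean that the normalized configuration passes lint with no \textsc{Error}. I then prove by induction on the configuration tree that every compiled term types at $\textbf{Str}\to\textbf{Str}$, case by case.
\begin{itemize}
\item \texttt{simple} yields $\textbf{lam}\ p\ \theta$, which types by \textsc{T-Lam-Oracle}.
\item \texttt{chain} yields a fold of $\texttt{>>}$; each stage is typed by the induction hypothesis and the fold types by \textsc{T-Comp}.
\item \texttt{router} yields $\textbf{case}$. Its classifier is an oracle composed with a parse into the label variant, and the branches are typed by the induction hypothesis, so \textsc{T-Case} applies. Lint rules L005 and L013 ensure $I$ is nonempty and exhaustive.
\item \texttt{react} yields $\textbf{fix}_n$ around a body $\lam s.\lam x.\ldots$, which must have type $(\textbf{Str}\to\textbf{Str})\to(\textbf{Str}\to\textbf{Str})$. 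The tools are typed by \textsc{T-Tool}, and Observation~\ref{obs:terminate} types \texttt{terminate} as $\textbf{tool}[\mathrm{id}]$.
\item \texttt{parallel} yields a pair, typed by \textsc{T-Pair}. Its output type is $\textbf{Str}\times\textbf{Str}$, so I will either allow an arbitrary $\tau$ in the statement or compose with a string join to return to $\textbf{Str}$.
\end{itemize}
The \textbf{guard} and \textbf{mem} wrappers are handled by \textsc{T-Guard} and \textsc{T-Mem}, taking $\Sigma$ to be the store typing induced by the declared memory schema.

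\textbf{Compositionality} then follows almost definitionally. Since $N_F(c_1\circ_F c_2)$ is the \texttt{chain} $[N_F(c_1),N_F(c_2)]$ and \texttt{from\_config} compiles a chain as a fold of $\texttt{>>}$, the two sides agree on the nose for binary chains. For longer chains or nested sequencing, the equation holds up to reassociation, and I will discharge that with the associativity and identity laws of Theorem~\ref{thm:algebra}. So clause (2) should be read modulo extensional $\equiv$ rather than syntactic equality. \textbf{Behavioral adequacy} is explicitly an empirical clause; it is established by the 125 faithfulness tests and the 8 pattern tests of Section~\ref{sec:eval}, not by deduction, and I will say so.

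The main obstacle is clause (2) for frameworks whose native composition is not obviously sequential. The hard cases are LangGraph graphs with cycles and conditional edges, and AutoGen group chats. For these, $\circ_F$ must first be pinned down as composition of acyclic subgraphs, or of turn segments. My first attempt is to restrict $\circ_F$ to \emph{series composition}, meaning the exit node of $c_1$ is wired to the entry node of $c_2$, and to show that $N_F$ maps cycles into a single $\textbf{fix}_n$ node, so that series composition of two such components still lands in \texttt{chain}. A secondary difficulty is making ``well-formed'' precise for each $F$ so that it matches the lint-clean criterion.
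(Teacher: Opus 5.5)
Your argument uses the same ingredients as the paper: the typing rules for clause (1), the monoid laws of Theorem~\ref{thm:algebra} for clause (2), and the test suite for clause (3). The decomposition, however, is different.

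The paper does a case analysis \emph{per framework}. It describes a translation for each paradigm: the explicit SDK tool-use loop, LangGraph cycles as $\textbf{fix}_n$ with the state dict as $\textbf{mem}$, AutoGen's \texttt{is\_termination\_msg} as the base case, and Dify nodes mapped one-to-one. It cites typing rules only for the SDK and LangGraph cases; for CrewAI, AutoGen and Dify it only describes the mapping, plus counts of validated configs.

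You instead factor $\mathcal{T}_F$ as \texttt{from\_config} after $N_F$. You then run one structural induction over the five normalized agent types plus the \textbf{guard}/\textbf{mem} wrappers, so all framework-specific content sits in two properties of $N_F$. This buys you several things:
\begin{itemize}
\item a deductive clause (1) for all five frameworks at once;
\item an actual definition of ``well-formed'', which the paper never gives;
\item an oracle-then-parse classifier that matches the premise of \textsc{T-Case} (the paper's SDK term applies $\textbf{case}$ directly to the string $\textbf{lam}\,p\,\theta\,x$);
\item attention to the $\textbf{Str}\times\textbf{Str}$ output of \texttt{parallel}, which the paper ignores;
\item the honest admission that clause (2) holds only modulo $\equiv$, since Theorem~\ref{thm:algebra} is an extensional law that the paper invokes for a syntactic equation.
\end{itemize}

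The price is that clause (2) now rests entirely on the unproved claim that $N_F$ sends $\circ_F$ to a \texttt{chain} node. That claim needs care when state is global. The paper maps LangGraph's state dict to one $\textbf{mem}$ around the whole graph, and $\textbf{mem}\,(e_1 \mathbin{\texttt{>>}} e_2)\,\sigma$ is in general not $\textbf{mem}\,e_1\,\sigma_1 \mathbin{\texttt{>>}} \textbf{mem}\,e_2\,\sigma_2$.

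Two small repairs are needed in your version:
\begin{itemize}
\item L013 is a \textsc{Warn} rule, so ``no \textsc{Error}'' does not give you exhaustive routes. You need L013 in your well-formedness condition if the parse tool is to be typed $\textbf{Str}\to\langle l_i\rangle_{i\in I}$ with $I$ the branch labels.
\item Only your string-join option for \texttt{parallel} closes the induction. Allowing an arbitrary output type breaks the \texttt{chain} case, because \textsc{T-Comp} needs adjacent stage types to match and lint-cleanness does not ensure that.
\end{itemize}

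Finally, Section~\ref{sec:eval} describes the 125 tests as comparing against outputs predicted by the operational semantics, not against runs under the framework runtimes. So clause (3) as worded is no better supported by your argument than by the paper's.
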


\begin{proof}
We construct $\mathcal{T}_F$ for each paradigm by case analysis on the configuration format:

\emph{SDK wrapper} ($F = $ OpenAI/Claude SDK): A single API call translates to $\textbf{lam}\,p\,\theta$; each tool to $\textbf{tool}[f_i]$. A tool-use loop with $n$ tools is:
\[
\textbf{fix}_k\,(\lam s.\lam x.\,\textbf{case}\,(\textbf{lam}\,p\,\theta\,x)\,\textbf{of}\,\{l_i \Rightarrow \textbf{tool}[f_i] \mathbin{\texttt{>>}} s,\; \texttt{done} \Rightarrow \text{id}\})
\]
This is the atomic layer upon which all other paradigms build. Type preservation: by \textsc{T-Lam-Oracle} and \textsc{T-Tool}, each primitive is well-typed; by \textsc{T-Fix} and \textsc{T-Case}, the composed term is well-typed.

\emph{Graph state machine} ($F = $ LangGraph): Nodes are $\lA$ functions $e_i : \tau \to \tau$. Sequential edges are $\texttt{>>}$; conditional edges are $\textbf{case}$. Cycles with a bound become $\textbf{fix}_n$. The graph's state dict maps to $\textbf{mem}\,e\,\sigma$. Type preservation follows from \textsc{T-Comp}, \textsc{T-Case}, \textsc{T-Fix}, and \textsc{T-Mem}.

\emph{Role-driven} ($F{=}$CrewAI): Role agent $a_i$ maps to a $\lA$ term; dispatcher is $\textbf{case}$ on role labels. Sequential: $a_1 {\texttt{>>}} {\cdots} {\texttt{>>}} a_n$. 441 CrewAI configs validated.

\emph{Multi-agent} ($F{=}$AutoGen): Group chat is $\textbf{fix}_n$ with $\textbf{case}$ on speaker; \texttt{is\_termination\_msg} is the base case. 22~configs validated.

\emph{Low-code} ($F = $ Dify): YAML nodes compile via \texttt{from\_config} directly: LLM node $\mapsto$ \textbf{lam}, tool node $\mapsto$ \textbf{tool}, IF/ELSE $\mapsto$ \textbf{if}, iteration $\mapsto$ $\textbf{fix}_n$, end node $\mapsto$ \texttt{terminate}.

Compositionality follows from the monoid structure of $\texttt{>>}$ (Theorem~\ref{thm:algebra}). Behavioral adequacy is verified by the 125 semantic faithfulness tests: each test compiles a configuration via $\mathcal{T}_F$, executes the resulting $\lA$ term, and checks that the output matches the expected result predicted by the framework semantics.
\end{proof}

\paragraph{Scope and limitations.}
The proposition covers the \emph{configuration-level} semantics---the YAML/JSON fragment. Framework-specific Python APIs (e.g., LangGraph's \texttt{add\_edge}) are not covered by $\mathcal{T}_F$. LlamaIndex's event-driven pattern is expressible via pairs ($\lam x.\langle e_1 x, e_2 x\rangle$) but not validated on real configs.

\begin{corollary}[Cross-Framework Composition]
\label{cor:universality}
Since all five paradigms translate to a common IR ($\lA$ terms), compositions that span frameworks---e.g., a LangGraph node that internally delegates to a CrewAI crew, or a Dify pipeline that invokes an AutoGen group chat---are expressible as well-typed $\lA$ terms. No existing framework supports such cross-framework composition natively.
\end{corollary}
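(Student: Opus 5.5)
The corollary is essentially a closure property, so I would prove it by combining Proposition~\ref{thm:embedding} with the fact that $\lA$ terms are closed under the term formers used for composition. The plan has three steps.

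\emph{Step 1: common types.} Fix frameworks $F$ and $G$ and well-formed configurations $c \in \mathcal{C}_F$, $d \in \mathcal{C}_G$. Type preservation in Proposition~\ref{thm:embedding} gives $\ctx \entails \mathcal{T}_F(c) : \tau$ and $\ctx \entails \mathcal{T}_G(d) : \tau'$ for some types $\tau, \tau'$. These types must fit together, so I would sharpen the statement: every translation in the proof of Proposition~\ref{thm:embedding} yields a term of type $\textbf{Str} \to \textbf{Str}$. I would check this construction by construction.
\begin{itemize}
  \item \textbf{lam} has this type by \textsc{T-Lam-Oracle}.
  \item $\textbf{fix}_n$ instantiated at $\tau = \textbf{Str}$ has it by \textsc{T-Fix}.
  \item \textbf{case} with string-valued branches has it by \textsc{T-Case}.
  \item $\texttt{>>}$ preserves it by \textsc{T-Comp}.
\end{itemize}
This uniform type is what makes the common IR genuinely common.

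\emph{Step 2: embedding as a sub-agent.} A cross-framework composition is one framework's term placed in a slot of another framework's term. I would enumerate the slot positions the translations use:
\begin{itemize}
  \item a stage of a $\texttt{>>}$ chain (LangGraph and CrewAI sequential edges);
  \item a branch $e_i$ of a \textbf{case} (LangGraph conditional edges, CrewAI role dispatch, AutoGen speaker selection);
  \item the body of a $\textbf{fix}_n$ (AutoGen group chat, Dify iteration);
  \item a component of a pair (parallel execution).
\end{itemize}
For each slot, the relevant typing rule requires only that the plugged term have type $\textbf{Str} \to \textbf{Str}$, or $(\textbf{Str}\to\textbf{Str})\to(\textbf{Str}\to\textbf{Str})$ for the \textbf{fix} body. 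I would state this as a substitution or context-plugging lemma: if $C[\cdot]$ is a well-typed context with a hole of type $\tau$ and $\ctx \entails e : \tau$, then $\ctx \entails C[e] : \sigma$. In the simply typed setting this is a routine induction on $C$.

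\emph{Step 3: safety and termination.} Apply Theorem~\ref{thm:safety} to the composite term, which is well-typed by Step 2, and Theorem~\ref{thm:termination} to any $\textbf{fix}_n$ inside it. This shows the cross-framework term is not only expressible but well-behaved. Sequential cross-framework composition is then simply $\mathcal{T}_F(c) \mathbin{\texttt{>>}} \mathcal{T}_G(d)$, and Theorem~\ref{thm:algebra} shows that bracketing of mixed chains is irrelevant.

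\emph{Main obstacle.} I expect the hard part to be Step 1's uniformity claim together with the \textbf{mem} case. A LangGraph translation uses $\textbf{mem}\,e\,\sigma$, so its typing depends on a store typing $\Sigma$. Nesting two frameworks' memories requires a joint store typing. I would try to take $\Sigma \cup \Sigma'$ and require disjoint key domains, achieved by renaming keys per framework. The append-only discipline from the Preservation proof (Theorem~\ref{thm:preservation}) then keeps the union consistent. If domains overlap with conflicting types, the claim fails, so disjointness should be stated as an explicit side condition.

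The final sentence of the corollary, that no framework supports this natively, is empirical and not part of the formal argument.
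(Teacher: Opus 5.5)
Your Steps 1--2 are the paper's argument made explicit. The paper's only justification is the clause ``since all five paradigms translate to a common IR'', i.e.\ type preservation from Proposition~\ref{thm:embedding} plus closure under the typing rules. Your uniform-type sharpening is a real improvement, since ``some $\tau$'' alone does not license plugging one translation into another.

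What goes wrong is your ``main obstacle'': it does not exist. No term former of $\lA$ reads or writes a store key, so no typing rule consumes $\Sigma$. \textsc{T-Mem} can always be used with $\Sigma=\emptyset$, because its premise $\sigma:\emptyset$ holds vacuously. An easy induction then gives $\ctx;\Sigma \entails e \tymark \tau \Rightarrow \ctx;\emptyset \entails e \tymark \tau$; as written, the paper even states \textsc{T-Comp}, \textsc{T-Case} and \textsc{T-Fix} at the empty store typing. So LangGraph's \textbf{mem}-wrapped translation plugs into any slot unconditionally. Your claim that the corollary fails when key domains overlap with conflicting types is false for this static statement; such conflicts could matter only for a shared runtime store. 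Your remedy would not work anyway. By the paper's definition, $\sigma : \Sigma\cup\Sigma'$ demands $\sigma(k) : (\Sigma\cup\Sigma')(k)$ for every $k$ in the union, which LangGraph's local store cannot meet when $\Sigma'$ has keys outside $\text{dom}(\sigma)$. Drop the disjointness side condition.

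The place that genuinely needs care is your \textbf{case} bullet in Step~1. \textsc{T-Case} also requires the classifier to have type $\textbf{Str} \to \langle l_i : \tau_i \rangle$. The translations (e.g.\ the SDK loop) instead use an LLM call, which \textsc{T-Lam-Oracle} types only at $\textbf{Str} \to \textbf{Str}$, and the grammar has no label-introducing form. You must read the classifier as, say, $\textbf{lam}\,p\,\theta \mathbin{\texttt{>>}} \textbf{tool}[\text{parse}]$ with $\text{parse} : \textbf{Str} \to \langle l_i : \textbf{Str} \rangle$. You must also invoke the paper's assumption that all other tools are string-to-string, since \textsc{T-Tool} alone allows any type. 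Uniformity holds only for the five translations in the proposition's proof, not for \texttt{from\_config} at large: \texttt{parallel} configurations compile to $\textbf{Str} \to \textbf{Str} \times \textbf{Str}$. Step~3 goes beyond what the corollary asserts and can be omitted.
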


The practical import is that $\lA$ serves as a \emph{framework-independent IR}: lint rules, type safety, termination bounds, and cost bounds developed once for $\lA$ apply to all five paradigms without framework-specific reimplementation.

\section{Related Work}
\label{sec:related}

Table~\ref{tab:comparison} compares $\lA$ with the most closely related formalisms for agent or AI workflow composition.

\begin{table}[t]
\centering
\caption{Comparison of agent/workflow formalisms.}
\label{tab:comparison}
\small
\begin{tabular}{@{}p{2.0cm}ccccc@{}}
\toprule
& \textbf{$\lA$} & \textbf{DSPy} & \textbf{Agint} & \textbf{CSP/$\pi$} & \textbf{Agentic 2.0} \\
\midrule
Formal syntax        & \checkmark & --- & \checkmark & \checkmark & \checkmark \\
Type system          & \checkmark & --- & \checkmark & --- & \checkmark \\
Type safety proof    & Coq        & --- & ---        & ---        & --- \\
Termination          & $\textbf{fix}_n$ & --- & --- & --- & --- \\
Static analysis      & 26 rules   & --- & ---        & ---        & --- \\
Empirical (configs)  & 835        & --- & ---        & ---        & --- \\
Framework coverage   & 5          & 1   & 1          & general    & 1 \\
\bottomrule
\end{tabular}
\end{table}

\paragraph{DSLs for AI/ML.}
Halide~\cite{ragan2013halide} and TVM~\cite{chen2018tvm} provide domain-specific abstractions for image processing and tensor computation, respectively. Dex~\cite{paszke2021dex} targets differentiable programming with a typed functional core. These DSLs optimize \emph{numerical computation}; $\lA$ targets \emph{LLM agent composition}, where the computational primitive is an oracle call (LLM inference), not a tensor operation. ONNX~\cite{bai2019onnx} serves as a universal IR for neural network models, enabling cross-framework interoperability. $\lA$ plays an analogous role for agent \emph{configurations}: a framework-independent IR that enables unified static analysis and cross-framework compilation.

\paragraph{Effect Systems and Monads.}
LLM calls, tool invocations, and memory updates are side effects. Algebraic effects~\cite{plotkin2009} and monad transformers~\cite{liang1995} provide frameworks for reasoning about effects in functional languages. Eff~\cite{bauer2015eff} and Koka~\cite{leijen2014koka} are languages with first-class algebraic effects. In $\lA$, we model effects implicitly: \textbf{lam} is an IO effect, \textbf{tool} is an IO effect, \textbf{mem} is a State effect, and \texttt{terminate} is a Control effect (it aborts the fixpoint). A full effect-system treatment---enabling effect-polymorphic agent combinators and effect-based optimization---is future work, but the correspondence is clear: each $\lA$ construct maps to a known effect.

\paragraph{Typed Web and API DSLs.}
Links~\cite{cooper2006links} and Ur/Web~\cite{chlipala2015urweb} provide typed DSLs for web programming, ensuring type safety across client-server boundaries. Servant~\cite{servant2016} uses Haskell types to derive API clients and servers from a single specification. $\lA$ shares this philosophy---deriving lint, runtime, and compilation from a single typed specification---but targets LLM agent composition rather than web services.

\paragraph{Probabilistic Programming.}
Church~\cite{goodman2008}, WebPPL~\cite{goodman2014}, and Pyro~\cite{bingham2019} embed probabilistic computation in functional languages. Our $\oplus_p$ operator is a simplified version of their stochastic primitives. The key difference is that our probabilistic oracle (\textbf{lam}) is \emph{opaque}: we cannot inspect or differentiate through it, only sample from it. Dal Lago \& Zorzi~\cite{dallago2012} extend lambda calculus with probabilistic choice while preserving Turing completeness, which validates our temperature-as-$\oplus_p$ design.

\paragraph{Agent Formalisms.}
BDI~\cite{rao1995bdi} provides a logical foundation for agents based on beliefs, desires, and intentions. Process algebras (CSP~\cite{hoare1978csp}, $\pi$-calculus~\cite{milner1999pi}) model concurrent communicating agents. In the LLM agent space, DSPy~\cite{khattab2024dspy} compiles declarative LLM modules with optimizer-driven prompt tuning. AFlow~\cite{zhang2025aflow} treats workflows as searchable code graphs (ICLR 2025 Oral). Agint~\cite{chivukula2025agint} introduces typed effect-aware DAGs for software engineering agents. Agentics 2.0~\cite{gliozzo2026agentics} proposes algebraic transducible functions for data workflows. OpenAI Swarm uses runtime handoff for multi-agent coordination, and Google A2A~\cite{googlea2a2025} defines an HTTP-based agent-to-agent protocol. Each independently reinvents a fragment of lambda calculus---DSPy's modules are lambda abstractions, AFlow's edges are composition, Swarm's handoff is dynamic routing---but none provides a \emph{typed calculus with formal metatheory}. We show that all five paradigms embed as typed fragments of $\lA$ (Proposition~\ref{thm:embedding}, \S\ref{sec:unification}), validated on 835 real-world configurations, establishing $\lA$ as a unifying calculus rather than yet another framework.

\paragraph{Static Analysis for Configurations.}
CUE~\cite{cuelang} and Dhall~\cite{dhall} provide typed configuration languages that catch errors at authoring time rather than deployment time. Kubernetes admission controllers validate YAML manifests against policy schemas. These tools perform \emph{syntactic} validation (field types, required fields). $\lA$ enables \emph{semantic} validation: detecting missing Y combinator base cases, vacuous fixpoints, and incomplete dispatch---properties that depend on the operational semantics, not just the schema.

\paragraph{Verified Compilation.}
CompCert~\cite{leroy2009compcert} and CakeML~\cite{kumar2014cakeml} verify compiler correctness via mechanized proofs relating two independently defined formal semantics. Our \texttt{from\_config} compiler lacks a formal source semantics (YAML has none), so we argue adequacy empirically rather than proving correctness formally (\S\ref{thm:compilation}). The gap between our work and verified compilation is a productive direction for future research: defining a formal semantics for a ``standard'' agent configuration format (similar to how HTML5 formalized web markup) would enable compilation correctness proofs.

\section{Discussion and Future Work}
\label{sec:discussion}

\paragraph{Limitations.}
(1) Our type system is simply typed; dependent types for richer refinements (e.g., ``output must be valid JSON'') are future work.
(2) We study the \emph{deterministic fragment} of $\lA$ (temperature $= 0$). The full probabilistic semantics---where \textbf{lam} returns a distribution $\text{Dist}(\tau)$---requires a monadic treatment: $\texttt{>>}$ becomes Kleisli composition ($\texttt{>>=}$: $\text{Dist}(\tau) \to (\tau \to \text{Dist}(\tau')) \to \text{Dist}(\tau')$), and type safety must account for distributional types throughout the pipeline. We leave this extension to future work, noting that (a)~probabilistic lambda calculus is known to be Turing-complete~\cite{dallago2012}, validating the theoretical feasibility, and (b)~our GitHub survey found that 72\% of configurations with explicit temperature use temperature $\le 0.3$, suggesting the deterministic fragment covers the majority of production use cases.
(3) Core metatheory is fully mechanized in Coq (1{,}519 lines, 42 theorems, 0 Admitted). All definitions, typing rules, reduction rules, and theorem proofs are verified: progress, preservation, multi-step preservation, type safety, substitution lemma, termination of bounded fixpoints, pipeline algebra, weakening, store weakening, and canonical forms.
(4) YAML-only lint precision ($\sim$52\%) improves to 96--100\% with joint YAML+Python AST analysis (Experiment~C, validated on 50 and 175 samples), confirming that rule quality is not the bottleneck.

\paragraph{Threats to Validity.}
\emph{Internal}: The 10 baseline configurations for fault injection (Experiment~A) are hand-constructed; real-world defect patterns may differ. Our Coq mechanization models LLM and tool calls as axiomatized total functions, which may not hold for real APIs (timeouts, rate limits).
\emph{External}: The 835-configuration dataset is dominated by CrewAI (53\%); results may not generalize to frameworks underrepresented in our sample (e.g., LlamaIndex workflows). The 175-sample joint analysis uses framework-knowledge-based ground truth for CrewAI rather than per-file manual inspection.
\emph{Construct}: ``Structurally incomplete'' (94.1\%) does not imply ``has a runtime bug''---production frameworks supplement YAML with Python code. Our stratified analysis (\S\ref{sec:eval}) and joint analysis (Experiment~C) quantify this gap precisely.

\paragraph{Future Work.}
(1) \emph{Extended Coq mechanization}: our fully verified development (1{,}519 lines, 42 Qed, 0 Admitted) could be extended with richer type features (dependent types, effect system) and a verified \texttt{from\_config} compiler if a formal YAML semantics is established.
(2) \emph{Type-and-effect system.} Each $\lA$ construct maps to a known effect: \textbf{lam} $\to$ \texttt{llm}$(m)$, \textbf{tool} $\to$ \texttt{io}, \textbf{mem} $\to$ \texttt{state}$(s)$, and pure constructs $\to$ \texttt{pure}. The extended judgment $\ctx \entails e : \tau \;!\; \varepsilon$ tracks which effects an agent may perform. We sketch the key rules:
\begin{mathpar}
  \inferrule[TE-Lam]
    {\theta : \text{Model}}
    {\ctx \entails \textbf{lam}\ p\ \theta : \textbf{Str} \to \textbf{Str} \;!\; \texttt{llm}(\theta)}

  \inferrule[TE-Comp]
    {\ctx \entails e_1 : \tau_1 \to \tau_2 \;!\; \varepsilon_1 \\ \ctx \entails e_2 : \tau_2 \to \tau_3 \;!\; \varepsilon_2}
    {\ctx \entails e_1 \mathbin{\texttt{>>}} e_2 : \tau_1 \to \tau_3 \;!\; \varepsilon_1 \cdot \varepsilon_2}

  \inferrule[TE-Fix]
    {\ctx \entails e : (\tau \to \tau) \to (\tau \to \tau) \;!\; \varepsilon}
    {\ctx \entails \textbf{fix}_n\ e : \tau \to \tau \;!\; \varepsilon^n}

  \inferrule[TE-Mem]
    {\ctx \entails e : \tau \to \tau' \;!\; \varepsilon \\ \sigma : \Sigma}
    {\ctx \entails \textbf{mem}\ e\ \sigma : \tau \to \tau' \;!\; \varepsilon \cdot \texttt{state}(\Sigma)}
\end{mathpar}
Here $\varepsilon_1 \cdot \varepsilon_2$ is serial effect composition and $\varepsilon^n$ is $n$-fold iteration. The effect algebra $(\varepsilon, \cdot, \texttt{pure})$ forms a monoid. This enables \emph{effect-based handler substitution}: a production handler executes \texttt{llm} effects via real API calls, while a test handler uses mocks---both type-safe. A full development is in preparation.
(3) \emph{Semantic-preserving transformations}: agent refactoring rules justified by the equational theory of $\lA$. The pipeline algebra (Theorem~\ref{thm:algebra}) provides the foundation; richer rewrite rules (e.g., fusing adjacent \textbf{lam} calls, hoisting shared memory) could reduce API costs.
(4) \emph{Productionizing joint YAML+Python analysis}: our prototype (Experiment~C) achieves 96.4\% precision on sampled data; scaling to arbitrary repositories requires robust Python import resolution and cross-module dataflow analysis.
(5) \emph{Monadic probabilistic extension}: full $\text{Dist}(\tau)$ types with Kleisli composition, enabling compositional reasoning about stochastic agent pipelines.

\section{Conclusion}

We presented $\lA$, a typed lambda calculus for LLM agent composition, and \DSL{}, its executable realization. The key insight is that existing agent configurations already encode a lambda calculus---making this structure explicit enables type safety, termination guarantees, compilation correctness, and practical lint tooling. Our evaluation shows that formal semantics is not merely an academic exercise: it finds real bugs in real configurations that no existing framework detects.

\paragraph{Artifact availability.} The \DSL{} implementation (4{,}903 lines Python), Coq mechanization (1{,}519 lines, 42 Qed, 0 Admitted), 835 GitHub agent configurations, and all experiment scripts are available at \url{https://github.com/kenny67nju/lambdagent}.\footnote{URL will be made public upon publication.}

\bibliographystyle{ACM-Reference-Format}

\end{document}